\def\barp{\bar{p}}
\def\by{\mathbf{y}}
\def\bx{\mathbf{x}}
\def\bz{\mathbf{z}}
\def\cN{\mathcal{N}}
\def\cM{\mathcal{M}}
\def\cS{\mathcal{S}}
\def\cO{\mathcal{O}}
\def\cL{\mathcal{L}}
\def\cH{\mathcal{H}}
\def\cT{\mathcal{T}}
\def\bbE{\mathbb{E}}
\def\bbI{\mathbb{1}}
\def\hby{\hat{\by}}
\def\hy{\hat{y}}
\def\cpij{\mathcal{P}_{i\to j}}
\def\Fij{F_{\cpij}}
\def\Fijr{F_{\cpij^r}}
\def\Fijo{F_{\cpij^1}}
\def\Fijt{F_{\cpij^2}}
\def\Eij{E_{i\to j}}
\def\Fbar{\bar{F}}
\def\VI{V}
\def\EI{E}
\def\GI{G}
\def\WI{W}
\def\Vc{V_{c}}
\def\Ec{E_{c}}
\def\Gc{G_{c}}
\def\Wc{W_{c}}
\def\eps{\epsilon}
\def \endprf{\hfill {\vrule height6pt width6pt depth0pt}\medskip}
\newenvironment{proof}{\noindent {\bf Proof} }{\endprf\par}
\newtheorem{theorem}{\textbf{Theorem}}
\newtheorem{definition}{\textbf{Definition}}
\newtheorem{lemma}{\textbf{Lemma}}
\newtheorem{alg}{\textbf{Algorithm}}
\newtheorem{corollary}{\textbf{Corollary}}
\newtheorem{proposition}{\textbf{Proposition}}
\newtheorem{remark}{\textbf{Remark}}
\newtheorem{example}{\textbf{Example}}
\begin{document}
\title{Network Infusion to Infer Information Sources in Networks}
\author{Soheil~Feizi,
        Muriel~M\'edard,
        Gerald~Quon,
        Manolis~Kellis,
        and~Ken~Duffy
\IEEEcompsocitemizethanks{\IEEEcompsocthanksitem S. Feizi, M. M\'edard and M. Kellis are with the Department of Electrical
    Engineering and Computer Science, MIT. G. Quon is with the University of California, Davis. K. Duffy is with the Hamilton Institute, Maynooth University. \protect\\}
}

\IEEEtitleabstractindextext{%
\begin{abstract}
Several significant models have been developed that enable the study of diffusion of signals across biological, social and engineered networks. Within these established frameworks, the inverse problem of identifying the source of the propagated signal is challenging, owing to the numerous alternative possibilities for signal progression through the network. In real world networks, the challenge of determining sources is compounded as the true propagation dynamics are typically unknown, and when they have been directly measured, they rarely conform to the assumptions of any of the well-studied models. In this paper we introduce a method called Network Infusion (NI) that has been designed to circumvent these issues, making source inference practical for large, complex real world networks. The key idea is that to infer the source node in the network, full characterization of diffusion dynamics, in many cases, may not be necessary. This objective is achieved by creating a diffusion kernel that well-approximates standard diffusion models, but lends itself to inversion, by design, via likelihood maximization or error minimization. We apply NI for both single-source and multi-source diffusion, for both single-snapshot and multi-snapshot observations, and for both homogeneous and heterogeneous diffusion setups. We prove the mean-field optimality of NI for different scenarios, and demonstrate its effectiveness over several synthetic networks. Moreover, we apply NI to a real-data application, identifying news sources in the Digg social network, and demonstrate the effectiveness of NI compared to existing methods. Finally, we propose an integrative source inference framework that combines NI with a distance centrality-based method, which leads to a robust performance in cases where the underlying dynamics are unknown.
\end{abstract}

\begin{IEEEkeywords}
Information Diffusion, Source Inference, Social Networks
\end{IEEEkeywords}}

\maketitle

\IEEEdisplaynontitleabstractindextext
\IEEEpeerreviewmaketitle
\ifCLASSOPTIONcompsoc
\IEEEraisesectionheading{\section{Introduction}\label{sec:intro}}
\else
\section{Introduction}
\label{sec:intro}
\fi

\IEEEPARstart{I}{nformation} from a single node (entity) can reach other nodes (entities) by propagation over network connections. For
instance, a virus infection (either computer or biological) can propagate to different nodes in a network and become an epidemic \cite{scale-free-epidemics}, while rumors can spread in a social network through social interactions \cite{social-spread-daron}. Even a financial failure of an institution can have cascading effects on other financial entities and may lead to a financial crisis \cite{finance-cascade}. As a final example, in some human diseases, abnormal activities of few genes can cause their target genes and therefore some essential biological processes to fail to operate normally in the cell \cite{disease1,disease2}.

In order to gain insight into these processes, mathematical models have been developed, primarily focusing on application to the study of virus propagation in networks (\cite{valente1996network,achlioptas2009explosive}). A well-established continuous-time diffusion model for viral epidemics is known as the susceptible-infected (SI) model \cite{SIR}, where infected nodes spread the virus to their neighbors probabilistically. For that diffusion model, \cite{newman-epidemic2002spread,newman-percolation-epidemics,ganesh2005effect,scale-free-epidemics} explore the relationship between network structure, infection rate, and the size of the epidemics, while \cite{demiris2005bayesian,tutorial-epidemic2002tutorial,okamura2007statistical} consider learning SI model parameters. Other diffusion methods use random walks to model information spread and label propagation in networks \cite{mostafavi2012labeling,Label-propagation,newman2003mixing}. These references study the forward problem of signal diffusion.

Source inference is the inverse problem. It aims to infer {\it source nodes} in a network by merely knowing the network structure and observing the information spread at single or multiple snapshots (Figure \ref{fig:framework}). Even within the context of the well-studied diffusion kernels, source inference is a difficult problem in great part owing to the presence of path multiplicity in the network \cite{newman2001scientific}. Recently, the inverse problem of a diffusion process in a network under a discrete time memoryless diffusion model, and when time steps are known, has been considered \cite{discrete-diffusion}, while the problem of identifying seed nodes (effectors) of a partially activated network in the steady-state of an Independent-Cascade model is investigated in \cite{Label-propagation}. Moreover reference \cite{farajtabar2015back} has considered the source inference problem using incomplete diffusion traces by maximizing the likelihood of the trace under the learned model. The problem setup and diffusion models considered in those works are different than the continuous-time diffusion setup considered in the present paper. The work in \cite{prakash2012spotting} uses the Minimum Description Length principle to identify source nodes using a heuristic cost function which combines the model cost and the data cost. Moreover, for the case of having a single source node in the network, some methods infer the source node based on distance centrality \cite{newman2010networks}, or degree centrality \cite{book-graphtheory-west2001introduction} measures of the infected subgraph. These methods can be efficiently applied to large networks, but, amongst other drawbacks, their performance lacks provable guarantees in general. For tree structures under a homogeneous SI diffusion model, \cite{shah2011rumors} computes a maximum likelihood solution for the source inference problem and provides provable guarantees for its performance. Over tree structures, their solution is equivalent to the distance centrality of the infected subgraph. The problem of inferring multiple sources in the network has an additional combinatorial complexity compared to the single-source case (see Remark \ref{remark:multi-source}). Reference \cite{lappas2010finding} has considered this problem under an Independent Cascade (IC) model and introduced a polynomial time algorithm based on dynamic-programming for some special cases. We study this problem for incoherent sources and under the SI model in Section \ref{subsec:NI-multiple-source}. We review this prior work more extensively in Section \ref{subsec:prior-work}.

\begin{figure}[t]
  \centering
      \includegraphics[width=0.5\textwidth]{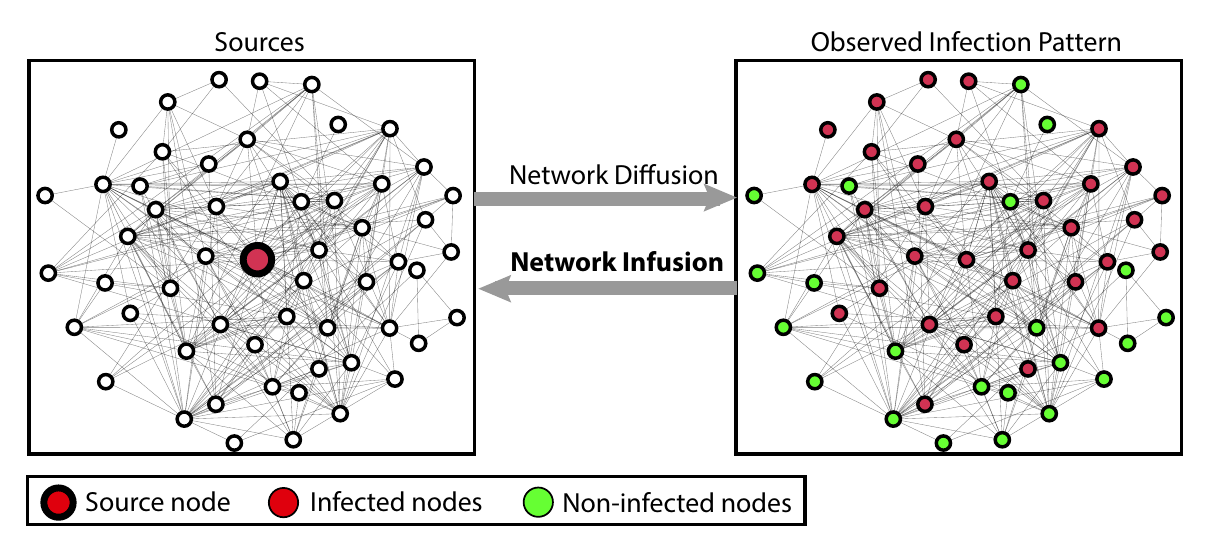}
  \caption{Network Infusion (NI) aims to identify source node(s) by reversing information propagation in the network. NI is based on a path-based network diffusion process that closely approximates the observed diffusion pattern, while leading to a tractable source inference method for large complex networks. The displayed network and infection pattern are parts of the Digg social news network.}
  \label{fig:framework}
\end{figure}

Source inference in real-world networks is made more challenging as the true propagation dynamics are typically unknown \cite{du2013uncover}, and when they have been directly measured, they rarely conform to the assumptions of well-studied canonical diffusion models such as SI, partially owing to heterogeneous information diffusion over network edges, latent sources of information, noisy network connections, non-memoryless transport and so forth \cite{goel2012structure}. As an example, we consider news spread over the Digg social news networks \cite{digg} for more than 3,500 news stories. We find that in approximately $65\%$ of cases, nodes who have received the news at a time $t$, did not have any neighbors who had already received the news by that time violating the most basic conditional independency assumption of the SI model. Furthermore, the empirical distribution of remaining news propagation times over edges of the Digg social news network cannot be approximated closely by a single distribution of a homogenous SI diffusion model, even by fitting a general Weibull distribution to the observed data (Appendix Figure \ref{fig:SIR-digg}).

Owing to high computational complexity of solving the source inference problem under the well-studied SI diffusion models and considering the fact that those kernels are unlikely to match precisely a real-world diffusion, our key idea to solve the inverse problem is to identify a diffusion process that closely approximates the observed diffusion pattern, but also leads to a tractable source inference method by design. Thus, we develop a diffusion kernel that is distinct from the standard SI diffusion models, but its order of diffusion well-approximates many of them in various setups. We shall show that this kernel leads to an efficient source inference method that can be computed efficiently for large complex networks and shall provide theoretical performance guarantees under general conditions. The key original observation, from both a theoretical and practical perspective, is that in order to solve the inverse problem one does not need to know the full dynamics of the diffusion, , instead to solve the inversion one can do so from statistics that are consistent across many diffusion models.

Instead of the full network, our proposed continuous-time network diffusion model considers $k$ edge-disjoint shortest paths among pairs of nodes, neglecting other paths in the network. We call this kernel a {\it path-based network diffusion kernel}. Propagation times in the kernel are stochastically independent, which leads to efficient kernel computation even for large complex networks. Using the path-based network diffusion kernel, we propose a computationally tractable general method for source inference called {\it Network Infusion} (NI), by maximizing the likelihood (NI-ML) or minimizing the prediction error (NI-ME). The NI-ME algorithm is based on an asymmetric Hamming premetric function, and unlike NI-ML, it can be tuned to balance between false positive and false negative error types. Our approach can be computed efficiently for large complex networks, similarly to the distance and degree centrality methods. However, unlike those methods, we provide provable performance guarantees under a continuous-time dynamic SI diffusion setup. We prove that under the SI diffusion model, (i) the maximum-likelihood NI algorithm is mean-field optimal for tree structures, and (ii) the minimum-error NI algorithm is mean-field optimal for regular tree structures. All proofs are presented in the Appendix \ref{sec:proofs}.

Most existing source inference methods consider the case of a single-source homogeneous diffusion setup, partially owing to additional combinatorial complexity of more general cases. We show that the proposed NI framework can be used efficiently for multi-source and heterogeneous diffusion setups under general conditions. Particularly, we prove that the proposed multi-source NI algorithm, which is based on localized likelihood optimization, is mean-field optimal in the regular tree structure for sufficiently-distant sources. Moreover, for a heterogeneous diffusion setup, we show that the network diffusion kernel can be characterized using the phase-type distribution of a Markov chain absorbing time, leading to efficient source inference methods. We also extend our NI algorithms to the cases with unknown or partially known diffusion model parameters such as observation times, by introducing techniques to learn these parameters from observed sample values.

We apply NI to several synthetic networks considering an underlying standard SI diffusion model and compare its performance to existing source inference methods. Having verified the effectiveness of NI both theoretically and through simulations, we then apply it to a real data application to identify the news sources for over 3,500 stories in the Digg social news network. We demonstrate the superior performance of NI compared to existing methods by validating the results based on annotated information, that was not provided to the algorithms. Finally, we propose an integrative source inference framework which combines source prediction ranks of network infusion and distance centrality and leads to a robust performance in cases where the underlying dynamics are unknown.

\section{Problem Setup, Computational Difficulties, and Prior Work}\label{sec:setup}
In this section, we present the source inference problem and explain its underlying challenges. We also review prior work and present notation used in the rest of the paper.

\subsection{Source Inference Problem Setup}\label{subsec:setup}
Let $G=(V,E)$ be a binary, possibly directed, graph representing a network with $n$ nodes, where $G(i,j)=1$ means that there is an edge from node $i$ to node $j$ (i.e., $(i,j)\in E$) and $G(i,j)=0$ means there is none. Let $\cN(i)$ represent the set
of neighbors of node $i$ in the network. If $G$ is directed, $\cN(i)$ represents the set of parents of node $i$. For the sake of description, we illustrate the problem setup and notation in the context of a virus infection spread in the network with the understanding that our framework can be used to solve a more general source inference problem. Let $\cS\subset V$ be the set of source nodes in the network. When a node gets infected, it
spreads infection to its neighbors, which causes the propagation of infection in the network. Let $\cT_{(i,j)}$ be the non-negative, continuous random variable representing the virus traveling time over the edge $(i,j)\in E$. The
$\cT_{(i,j)}$ variables are assumed to be mutually independent. Let $\cpij$ denote a path, i.e. an ordered set of edges, connecting node $i$ to node $j$ in the network. We define $\cT_{\cpij}$ as a random variable representing the virus traveling time over the path $\cpij$, with the following cumulative density function,
\begin{align}\label{eq:cdf-general-path}
\Fij(t)\triangleq Pr[\cT_{\cpij}\leq t].
\end{align}

Let $\by(t)\in\{0,1\}^{n}$ be the node infection vector at time $t$, where $y_i(t)=1$ means that node $i$ is infected at time $t$. Suppose $\cT_i$ is a random variable representing the time that node $i$ gets infected. We assume that if a node gets infected it remains infected (i.e., there is no recovery). Suppose $\tau_i$ is a realization of the random variable $\cT_i$. Thus, $y_i(t)=1$ if $t\geq \tau_i$, otherwise $y_i(t)=0$. If $i$ is a source node, $\cT_i=0$ and $y_i(t)=1$ for all $t\geq 0$. The set $V^{t}=\{i:y_i(t)=1\}$ represents all nodes that are infected at time $t$.

\begin{definition}\label{def:SI-model}
In a dynamic Susceptible-Infected (SI) diffusion setup, we have
\begin{align}
\cT_i= \min_{j\in\cN(i)} (\cT_j+\cT_{(j,i)}).
\end{align}
\end{definition}

Let $\{\by(t): t\in (0,\infty)\}$ represent a continuous-time stationary stochastic process of diffusion in the network $G$. In the source inference problem, given the sample values at times $\{t_1,\ldots,t_z\}$ (i.e., $\{\by(t_1),\ldots,\by(t_z)\}$), as
well as the underlying graph structure $G$, we wish to infer sources nodes that started the infection at time $0$. We assume that the number of sources to be inferred (i.e., $m$) and the observation time stamps (i.e., $\{t_1,\ldots,t_z\}$) are also given. We discuss the cases with unknown or partially known parameters in Section \ref{subsec:parameters}.

One way to formulate the source inference problem is to use a standard maximum a posteriori (MAP) estimation.

\begin{definition}[MAP Source Inference]\label{def:MAP-setup}
The MAP source inference solves the following optimization:

\begin{align}\label{opt:map-setup}
\arg\max \quad& Pr\big(\by(0)|\by(t_1),\ldots,\by(t_z)\big),\\
&\|\by(0)\|_{l_0}=m, \nonumber
\end{align}
\noindent
where $m$ is the number of source nodes in the network, and $\|.\|_{l_0}$ represents the $l_0$ norm of a vector.
\end{definition}
\noindent
In some applications, there may be nonuniform prior probabilities for different candidate source nodes. The MAP source inference optimization takes into account these prior probabilities as well. If there is no informative prior probabilities for candidate source nodes, the MAP source Optimization \eqref{opt:map-setup} can be simplified to the following maximum likelihood (ML) source estimation:

\begin{definition}[ML Source Inference]\label{def:ML-setup}
The ML source inference solves the following optimization:

\begin{align}\label{opt:ml-setup}
\arg\max \quad& Pr\big(\by(t_1),\ldots,\by(t_z)|\by(0)\big),\\
&\|\by(0)\|_{l_0}=m, \nonumber
\end{align}
\noindent
where its objective function is an ML function (score) of source candidates.
\end{definition}
\noindent
An alternative formulation for the source inference problem is based on minimizing the prediction error. In Section \ref{subsec:NI-global}, we propose a minimum prediction error formulation that uses an asymmetric Hamming pre-metric function and can balance between false positive and false negative error types by tuning a parameter.

\begin{figure}[t]
  \centering
      \includegraphics[width=0.4\textwidth]{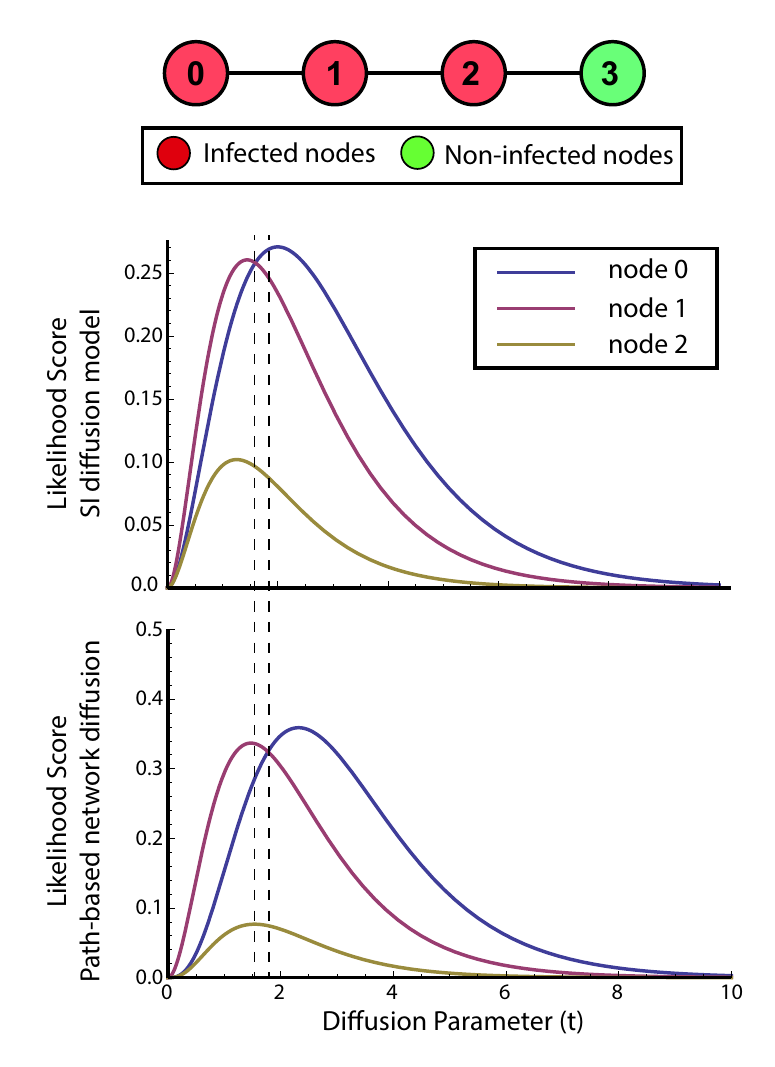}
  \caption{Likelihood scores based on the SI diffusion model and the path-based network diffusion kernel for a depicted line network. Although the underlying diffusion model is SI, NI optimization leads to a tight approximation of the exact solution for a wide range of parameter $t$ (except the range between two vertical dashed lines). Unlike SI, NI computation can be efficiently extended to large complex networks, owing to decoupling of its likelihood terms.}
  \label{fig:line-ex}
\end{figure}

\subsection{Computational Difficulties of the Source Inference Problem}\label{subsec:comp-diff}
In this section, we explain underlying challenges of the source inference problem.

\begin{remark}\label{remark:nonuniqueness}
\textup{Suppose the underlying network $G$ has 4 nodes and 3 undirected edges as depicted in Figure \ref{fig:line-ex}. Suppose the underlying diffusion is according to the SI model of Definition \ref{def:SI-model}. Let the edge holding time variables $\cT_{(i,j)}$ be mutually independent and be distributed according to an exponential distribution with parameter $\lambda$:}
\begin{align}\label{eq:exp-dist}
f_{i,j}(\tau_{i,j})= \lambda e^{-\lambda \tau_{i,j}}, \quad \forall (i,j)\in E.
\end{align}
\textup{Without loss of generality, let $\lambda=1$. Suppose there is a single source in the network (i.e., $m=1$), and we observe the infection pattern at a single snapshot at time $t$. Let the observed infection pattern at time $t$ be $\by(t)=(1,1,1,0)$, implying that nodes $\{0,1,2\}$ are infected at time $t$, while node $3$ is not infected at that time. Our goal is to find the most likely source node, according to the ML Optimization \ref{opt:ml-setup}. Let $\tau_i$ be a realization of the random variable $\cT_i$ (i.e., the time that virus arrives at node $i$). If node $0$ was the source node (i.e., $\tau_0=0$), we would have $\tau_1 \leq \tau_2 \leq t \leq \tau_3$ because of the underlying network structure. Thus,}

\begin{align}\label{eq:line-diffusion}
& Pr\big(\by(t)=(1,1,1,0)|\by(0)=(1,0,0,0)\big)\\
= & \int_{\tau_1=0}^{t}\int_{\tau_2=\tau_1}^{t}\int_{\tau_3=t}^{\infty} e^{-\tau_1} e^{-(\tau_2-\tau_1)}e^{-(\tau_3-\tau_2)} d\tau_1 d\tau_2 d\tau_3\nonumber\\
= & \int_{\tau_1=0}^{t}\int_{\tau_2=\tau_1}^{t}\int_{\tau_3=t}^{\infty} e^{-\tau_3} d\tau_1 d\tau_2 d\tau_3\nonumber\\
= & \frac{1}{2} t^2 e^{-t}.\nonumber
\end{align}
\noindent
\textup{Similarly, we have,}
\begin{align} \label{eq:line-diffusion2}
Pr\big(\by(t)=(1,1,1,0)|\by(0)=(0,1,0,0)\big)=& t(1-e^{-t})e^{-t},\\
Pr\big(\by(t)=(1,1,1,0)|\by(0)=(0,0,1,0)\big)=& e^{-t}-(1+te^{-2t})\nonumber.
\end{align}
\noindent
\textup{These likelihood functions are plotted in \ref{fig:line-ex}. For a given observation time stamp $t$, the ML source estimator selects the node with the maximum likelihood score as the source node, according to Optimization \ref{opt:ml-setup}. Note that an optimal source solution depends on the observation time parameter $t$ (i.e., for $t\lesssim 1.6$, node $1$ and for $t\gtrsim 1.6$, node $0$ are ML optimal source nodes.)}
\end{remark}

\begin{remark}\label{remark:path-dependency}
\textup{Suppose $G$ is a network with $5$ nodes and $5$ edges as shown in Figure \ref{fig:example-graphs}-a. Consider the same diffusion setup as the one of Remark \ref{remark:nonuniqueness}. Let $\by(t)=(1,1,1,1,0)$; i.e., nodes $\{0,1,2,3\}$ are infected at time $t$ while node $4$ is not infected at that time. Similarly to Remark \ref{remark:nonuniqueness}, let $\tau_i$ be a realization of the random variable $\cT_{i}$, a variable representing the time that virus arrives at node $i$. If node $0$ was the source node (i.e., $\tau_0=0$), we would have $\tau_1 \leq \min(\tau_2,\tau_3) \leq \tau_4$,  $\max(\tau_2,\tau_3)\leq t$, and $\tau_4 > t$. Thus,}

\begin{align}\label{eq:path-variable-dependecy}
&Pr\big(\by(t)=(1,1,1,1,0)|\by(0)=(1,0,0,0,0)\big)\\
= & 2 e^{-t}-e^{-2t}\big(1+(1+t)^2\big).\nonumber
\end{align}
\noindent
\textup{In this case, likelihood computation is more complicated than the case of Remark \ref{remark:nonuniqueness}, because both variables $\cT_2$ and $\cT_3$ depend on $\cT_1$, and therefore, consecutive terms do not cancel as in \eqref{eq:line-diffusion}. Moreover, note that there are two paths from node $0$ to node $4$ that overlap at edge $(0,1)$. As we have mentioned earlier, such overlaps are a source of difficulty in the source inference problem, which is illustrated by this simplest example, because the variable $\cT_4$ depends on both variables $\cT_2$ and $\cT_3$ through a $\min(.,.)$ function which makes computation of the likelihood integral complicated.}
\end{remark}

\begin{remark}\label{remark:multi-source}
\textup{Remarks \ref{remark:nonuniqueness} and \ref{remark:path-dependency} explain underlying source inference challenges for a single source node in the network. The case of having multiple source nodes has additional complexity because likelihood scores of Optimization \eqref{opt:ml-setup} should be computed for all possible subsets of infected nodes. For the case of having $m$ sources in the network, there are ${|\VI^{t}|\choose m}$ candidate source sets where for each of them a likelihood score should be computed. If there are a significant number of infected nodes in the network (i.e., $V^{t}=\cO(n)$), there would be $\cO(n^m)$ source candidate sets. This makes the multi-source inference problem computationally expensive for large networks, even for small values of $m$.}
\end{remark}
\noindent
Moreover, in Remarks \ref{remark:nonuniqueness} and \ref{remark:path-dependency}, we assume that the edge holding time distribution is known and follows an exponential distribution with the same parameter for all edges. This is the standard diffusion model used in most epidemic studies \cite{newman-epidemic2002spread}, because the exponential distribution has a single parameter and is memoryless. However, in some practical applications, the edge holding time distribution may be unknown and/or may vary for different edges. We discuss this case in Section \ref{subsec:hetro-NI}.

\begin{figure}[t]
  \centering
      \includegraphics[width=0.3\textwidth]{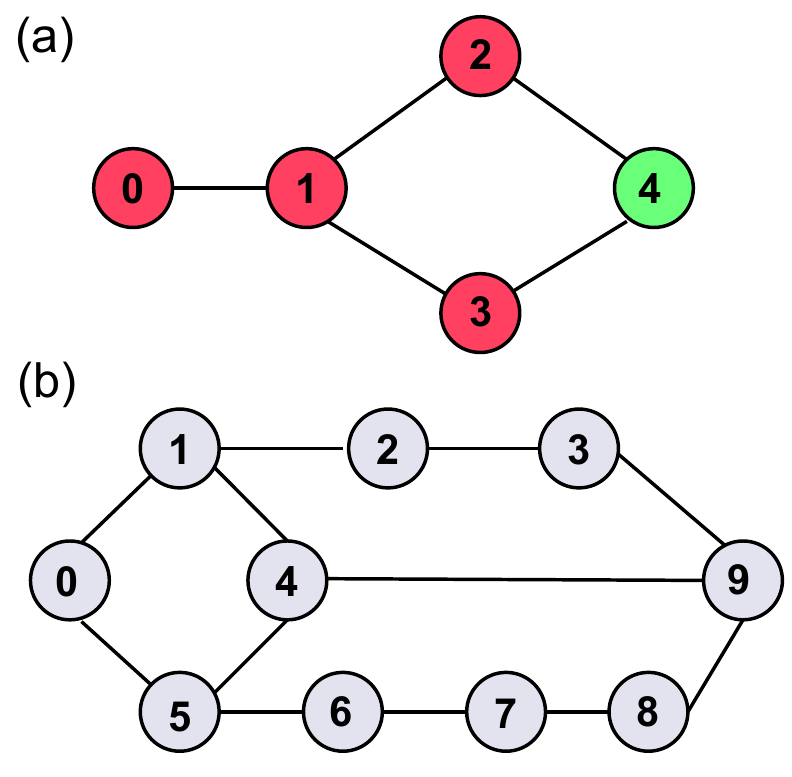}
  \caption{(a) A graph considered in Remark \ref{remark:path-dependency}. (b) An example graph with overlapping shortest paths between nodes $0$ and $9$.}
  \label{fig:example-graphs}
\end{figure}

\subsection{Prior Work}\label{subsec:prior-work}

While our approach considers a general network diffusion setup and its inverse problem, most of the literature considers application to specific problems. The most common ones focus on studying different models of virus propagation in population networks. A standard information diffusion model in this setup is known as the susceptible-infected-recovered (SIR) model \cite{SIR}. At any time, nodes have three types in this model: susceptible nodes which are capable of getting infected, infected nodes that spread virus in the network, and recovered nodes that are cured and can no longer become infected. Under the SIR diffusion model, infection spreads from sources to susceptible nodes probabilistically. References \cite{newman-epidemic2002spread,newman-percolation-epidemics,ganesh2005effect,scale-free-epidemics} discuss the relationship among network structure, infection rate, and the size of the epidemics under this diffusion model. Learning different diffusion parameters of this model have been considered in references \cite{demiris2005bayesian,tutorial-epidemic2002tutorial,okamura2007statistical}. Some other diffusion methods use random walks to model information spread and label propagation in networks \cite{mostafavi2012labeling,Label-propagation,newman2003mixing}. In these models, a random walker goes to a neighbor node with a probability inversely related to node degrees. Therefore, high degree nodes may be less influential in information spread in the network which may be counter-intuitive in some applications.

Although there are several works on understanding mechanisms of diffusion processes in different networks, there is somehow little work on studying the inverse diffusion problem to infer information sources, in great part owing to the presence of path multiplicity in the network \cite{newman2001scientific}, described in Remarks \ref{remark:nonuniqueness}, \ref{remark:path-dependency}, and \ref{remark:multi-source}. Recently, \cite{discrete-diffusion} considers the inverse problem of a diffusion process in a network under a discrete time memoryless diffusion model when time steps are known, while reference \cite{Label-propagation} investigates the problem of identify seed nodes (effectors) of a partially activated network in the steady-state of an Independent-Cascade model. Moreover reference \cite{farajtabar2015back} has studied this problem using incomplete diffusion traces by maximizing the likelihood of the trace under the learned model. The diffusion models and the problem setup considered in these works are different from the continuous-time dynamic diffusion setup considered in this paper. Furthermore, reference \cite{prakash2012spotting} uses the Minimum Description Length principle to identify source nodes in the network. Their method uses a heuristic cost function which combines the model cost and the data cost and lacks theoretical performance characterization.

For the case of having a single source node in the network, some methods infer the source node based on distance centrality \cite{newman2010networks}, or degree centrality \cite{book-graphtheory-west2001introduction} measures of the infected subgraph. These methods are efficient to apply to large networks. However, because they do not assume any particular diffusion model, their performance lacks provable guarantees in general. For tree structures and under a homogeneous SI diffusion model, reference \cite{shah2011rumors} computes a maximum likelihood solution for the source inference problem and provides provable guarantees for its performance. Over tree structures, their solution is in fact equivalent to the distance centrality of the infected subgraph.

The multi-source inference problem has an additional combinatorial complexity compared to the single-source case (see Remark \ref{remark:multi-source}). Reference \cite{lappas2010finding} has studied the complexity of this problem for different types of graphs under an Independent Cascade (IC) model. It shows that for arbitrary graphs this problem is NP-hard to even approximate. However, for some special cases, they introduce a polynomial time algorithm based on dynamic-programming to solve the multi-source problem under the IC model. We study this problem under the SI model for incoherent sources in Section \ref{subsec:NI-multiple-source}.

\section{Main Results}\label{sec:model}
In this section, first, we introduce a path-based network diffusion kernel which is used in proposed source inference methods. Then we present algorithms to infer single and multiple sources in the network. Finally, we present NI algorithms for heterogeneous diffusion and non-parametric cases. For the sake of description, we shall, as before, have a recurrent example of the virus infection spread in the network with the understanding that our framework can be used to solve a more general source inference problem.

\subsection{The path-based Network Diffusion Kernel}\label{subsec:kernel}
Consider the case when there exists a single source node in the network. The multi-source generalization treated in Section \ref{subsec:NI-multiple-source}. Suppose the network structure $G=(V,E)$ is known, and a snapshot $\by(t)$ from some {\it real} diffusion dynamics is observed at time $t$. In general, there may be several diffusion processes that lead to the observed infection snapshot in the network, either exactly or approximately. Suppose $\hby(t')$ is the sample generated at time $t'$ using a certain diffusion model. One way to characterize the error of this diffusion model to explain the observed diffusion sample is to use an asymmetric Hamming premetric function as follows:

\begin{align}\label{eq:diffusion-hamming-apx}
\min_{t'} h_{\alpha}(\by(t),\hby(t')) \triangleq (1-\alpha) \sum_{i:y_i(t)=1} \bbI_{\hy_i=0}+ \alpha \sum_{i:y_i(t)=0} \bbI_{\hy_i(t')=1},
\end{align}
\noindent
where $0\leq \alpha\leq 1$. This error metric assigns weight $\alpha$ to false positive and weight $1-\alpha$ to false negatives error types. To solve the inverse problem, one may select a diffusion process which approximates the observed diffusion pattern closely and also leads to a tractable source inference method. Although the SI diffusion model may be well-suited to model the forward problem of information diffusion in the network, solving the inverse problem (the source inference problem) under this model is challenging in general, in great part owing to the presence of path multiplicity in the network, as we explain in Remarks \ref{remark:nonuniqueness}, \ref{remark:path-dependency} and \ref{remark:multi-source}. Here, we present a path-based network diffusion kernel that is distinct from the standard SI diffusion models, but its order of diffusion approximates well many of them. We will show that this kernel leads to an efficient source inference method with theoretical performance guarantees, under some general conditions.

In our diffusion model, instead of the full network, we consider up to $k$ edge-disjoint shortest paths among pairs of nodes, neglecting other paths in the network. Suppose $\cpij^1$, $\cpij^2$, ... represent different paths between nodes $i$ and $j$ in the network. The length of a path $\cpij^r$ is denoted by $|\cpij^r|$. Let $\Eij^r$ be the set of edges of the path $\cpij^r$. We say two paths are edge-disjoint if the set of their edges do not overlap. Let $\{\cpij^1,\cpij^2,\ldots,\cpij^k\}$ represent $k$ disjoint shortest paths between nodes $i$ and $j$. We choose these paths iteratively so that

\begin{itemize}
\item $|\cpij^1|\leq |\cpij^2|\leq \ldots \leq |\cpij^k|$,
\item paths are disjoint. I.e., for $1<r\leq k$, $\Eij^r \bigcap \big(\bigcup_{a=1}^{r-1}\Eij^a\big)=\emptyset$,
\item $\cpij^r$ is a shortest path between nodes $i$ and $j$ in the network $G'=(V,E-\bigcup_{a=1}^{r-1}\Eij^a)$.
\end{itemize}

In some cases, the shortest path solutions may not be unique. That is, there are at least two shortest paths connecting nodes $i$ to $j$ in the network. If these shortest paths do not overlap, the resulting path length vector $(|\cpij^1|,\ldots,|\cpij^k|)$ is the same irrespective of the selection order. Thus, the tie breaking can be done randomly. However, in the case of having overlapping shortest paths, one way to break the tie among these paths is to choose the one which leads to a shorter path in the next step. For example, consider the network depicted in Figure \ref{fig:example-graphs}-b. There are two paths of length $3$ between nodes $0$ and $9$. Choosing the path $0-5-4-9$ leads to the next independent path $0-1-2-3-9$ with length $4$, while choosing the path $0-1-4-9$ leads to the next path $0-5-6-7-8-9$ of length $5$. Thus, the algorithm chooses the path $0-5-4-9$. If next paths have the same length, tie would be broken considering more future steps. In practice, this case has negligible effect in the performance of the source inference method. Methods based on message-passing or dynamic programming can be used to select optimal $k$ shortest paths in the network as well \cite{kpash1,kpash2}. In this paper, we break ties randomly among paths with the same length.

Recall that $\cT_{\cpij^r}$ represents the virus traveling time over the path $\cpij^r$ whose cumulative density function is denoted by $\Fijr(.)$ according to Equation \eqref{eq:cdf-general-path}.

\begin{definition}[Path-based network diffusion kernel]\label{def:path-diffusion-kernel}
Let $p_{i,j}(t)$ be the probability of node $j$ being infected at time $t$ if node $i$ is the source node. Thus

\begin{align}\label{eq:path-diffusion-kernel}
p_{i,j}(t)= & Pr\big[y_j(t)=1|y_i(0)=1\big]\\
\triangleq & 1- \prod_{r=1}^{k} \big(1-\Fijr(t)\big),\nonumber
\end{align}
\noindent
where $k$ is the number of independent shortest paths between nodes $i$ and $j$. $P(t)=[p_{i,j}(t)]$ is called a path-based network diffusion kernel.
\end{definition}

In the path-based network diffusion kernel node $j$ gets infected at time $t$ if the infection reaches to it over at least one of the $k$ independent shortest paths connecting that node to the source node. The path-based network diffusion kernel provides a non-dynamic diffusion basis for the network and is based on two important assumptions: that edge holding time variables $\cT_{(i,j)}$ are mutually independent; and the paths are disjoint. A path-based network diffusion kernel with $k=1$ only considers the shortest paths in the network and has the least computational complexity among other path-based network diffusion kernels. Considering more paths among nodes in the network (i.e., $k>1$) provides a better characterization of network diffusion processes with the cost of increased kernel computational complexity (Proposition \ref{prop:complexity-diffusion-kernel}). Computation of the path-based network diffusion kernel compared to the
SI diffusion model is provably efficient even for large and complex networks.

In the following, we highlight properties and relaxations of the path-based network diffusion kernel:

The path-based network diffusion kernel provides a non-dynamic diffusion model, where nodes become infected independently based on their distances (path lengths) to source nodes. Unlike the dynamic SI model, in the path network diffusion model, it is possible (though unlikely) to have $y_i(t)=1$ while $y_j(t)=0$, for all neighbors of node $i$ (i.e., $j\in\cN(i)$). The key idea is that to infer the source node in the network, full characterization of diffusion dynamics, in many cases, may not be necessary as long as the diffusion model approximates the observed samples closely (e.g., according to an error metric of \eqref{eq:diffusion-hamming-apx}). For instance, consider the setup of Remark \ref{remark:nonuniqueness} where the underlying diffusion model is according to a SI model. In that example, we compute source likelihood scores in \eqref{eq:line-diffusion} and \eqref{eq:line-diffusion2} by integrating likelihood conditional density functions. The likelihood computation under this model becomes challenging for complex networks. However, according to the path-based network diffusion model of Definition \ref{def:path-diffusion-kernel}, these likelihood scores are decoupled to separate terms and can be computed efficiently as follows:

\begin{align}\label{eq:line-diffusion3}
Pr\big(\by(t)=(1,1,1,0)|\by(0)=(1,0,0,0)\big)=&F(1,t)F(2,t)\Fbar(3,t),\\
Pr\big(\by(t)=(1,1,1,0)|\by(0)=(0,1,0,0)\big)=&F(1,t)^2\Fbar(2,t),\nonumber\\
Pr\big(\by(t)=(1,1,1,0)|\by(0)=(0,0,1,0)\big)=&F(1,t)F(2,t)\Fbar(1,t),\nonumber
\end{align}
\noindent
where $F(l,t)$ is the Erlang cumulative distribution function over a path of length $l$, that we shall show in \eqref{eq:erlang-simple-notation}. Figure \ref{fig:line-ex} shows likelihood scores of infected nodes computed according to \eqref{eq:line-diffusion3}. This example illustrates that for a wide range of parameter $t$, both models lead to the same optimal solution. Moreover, unlike the SI model, likelihood functions can be computed efficiently using the path-based network diffusion kernel, even for large complex networks.

The path-based diffusion kernel considers only the top $k$ shortest paths among nodes, neglecting other paths in the networks. The effect of long paths is dominated by the one of short ones leading to a tight approximation. Suppose $\cpij^1$ and $\cpij^2$ represents two paths between nodes $i$ and $j$ where $|\cpij^1| \ll |\cpij^2|$ (i.e., the path $\cpij^2$ is much longer than the path $\cpij^1$). Thus, for a wide range of parameter $t$, we have $\Fijo(t) \gg \Fijt(t)$, and therefore

\begin{align}\label{eq:apx-k-path-diffusion}
(1-\Fijo(t))(1-\Fijt(t))\approx 1-\Fijo(t).
\end{align}

Note that for very small or large $t$ values (i.e., $t\to 0$ or $t\to \infty$), both $\Fijo(.)$ and $\Fijt(.)$ go to 0 and 1, respectively, and thus the approximation \eqref{eq:apx-k-path-diffusion} remains tight. For an example network depicted in Figure \ref{fig:kernel-apx}-a, we illustrate the tightness of the first order approximation (i.e., $k=1$) for different lengths of the path $\cpij^2$. In general for large $k$ values the gap between the approximate and the exact kernels becomes small with the cost of increased kernel computational complexity (see Proposition \ref{prop:complexity-diffusion-kernel}). The same approximation holds for overlapping paths with different lengths as it is illustrated in Figures \ref{fig:kernel-apx}-b.

\begin{figure}[t]
  \centering
      \includegraphics[width=0.3\textwidth]{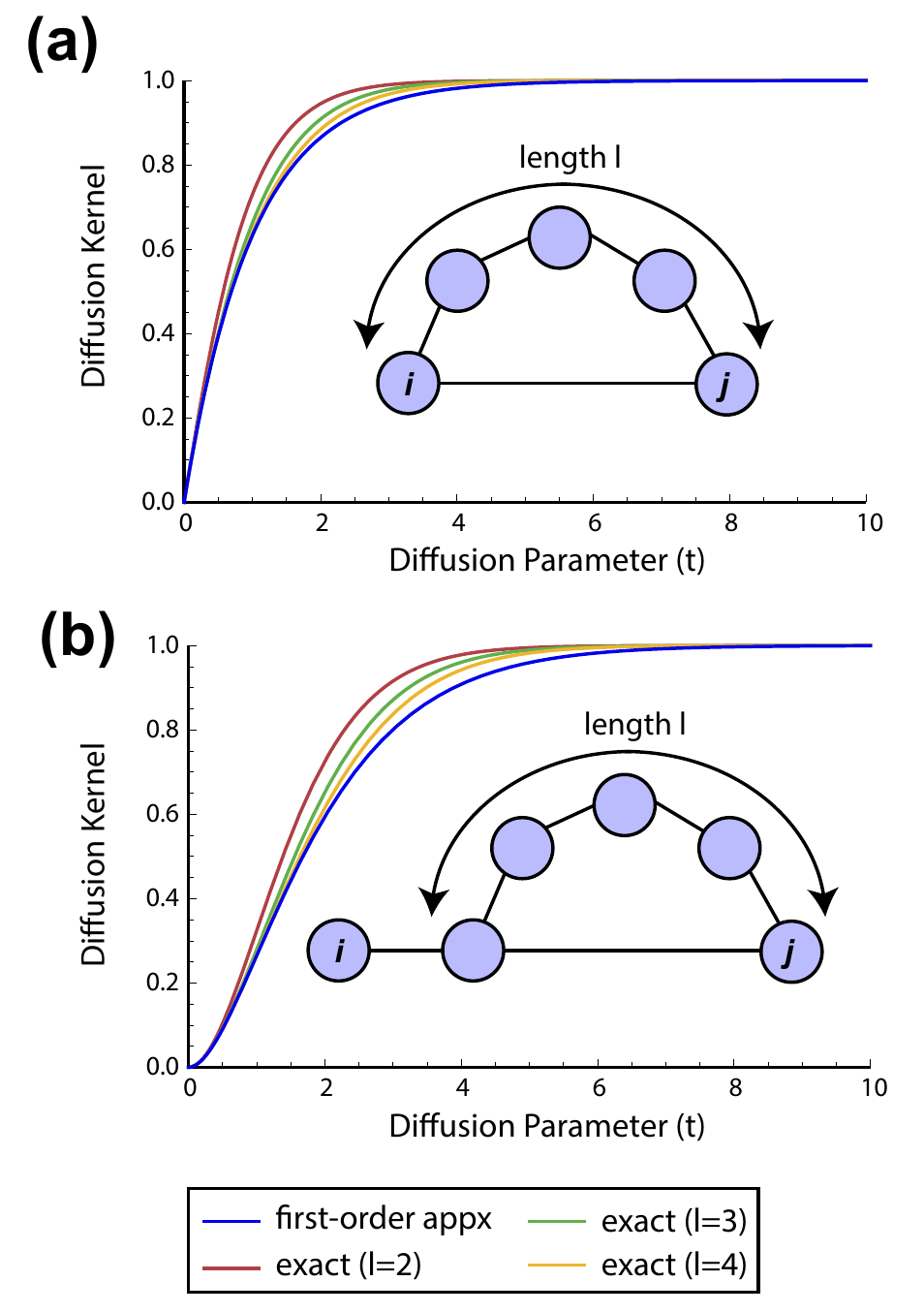}
  \caption{Tightness of the first order approximation of the path-based network diffusion kernel over example networks.}
  \label{fig:kernel-apx}
\end{figure}

Finally, path-based network diffusion kernel only considers independent shortest paths among nodes and therefore ignores the effects of non-disjoint paths in the network. This is a critical relaxation because, as we explain in Remark \ref{remark:path-dependency}, overlapping paths and dependent variables make the likelihood computation and therefore source inference challenging. In general, if there are many overlapping shortest paths among nodes in the network, this approximation might not be tight. However, in network structures whose paths do not overlap significantly (for example tree structures), this approximation is tight. In Appendix \ref{SIsec:erdos}, we show that in a common model for sparse random networks \cite{erdHos1961strength}, shortest paths among nodes that are the dominating factors in the path-based network diffusion kernel computation, are extremely unlikely to overlap with each other, leading to a tight kernel approximation.

One main advantage of using the path-based network diffusion kernel compared to other diffusion models such as the SI diffusion model is its efficient computation even for large and complex networks:

\begin{proposition}[Computational complexity of path-based network diffusion kernel]\label{prop:complexity-diffusion-kernel}
Let $G=(V,E)$ be a directed network with $n$ nodes and $|E|$ edges. Then, computation of the $k$-path network diffusion kernel of Definition \ref{def:path-diffusion-kernel} has a worst case computational complexity $\cO(k|E|n+kn^2\log(n))$.
\end{proposition}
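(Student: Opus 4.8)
The plan is to exhibit an explicit algorithm that realises the kernel of Definition \ref{def:path-diffusion-kernel} and bound its cost term by term. First I would reduce the task to producing, for every ordered pair $(i,j)$, the vector of path lengths $(|\cpij^1|,\ldots,|\cpij^k|)$: once these are known, each factor $1-\Fijr(t)$ in \eqref{eq:path-diffusion-kernel} is read off in constant time from the closed-form path-time CDF (the Erlang expression used in \eqref{eq:line-diffusion3}, after an $\cO(n)$ one-time tabulation over the at most $n$ possible lengths), so assembling all $n^2$ entries $p_{i,j}(t)$ costs $\cO(kn^2)$, which will turn out to be dominated. Hence it suffices to account for the cost of the length vectors.

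Next I would organise the length computation as an outer loop over the $n$ candidate source nodes $i$ and, for fixed $i$, an inner loop over the $k$ rounds of the iterative construction described before Definition \ref{def:path-diffusion-kernel}. In round $r$ I run a single invocation of a single-source shortest-path algorithm (Dijkstra with a Fibonacci heap, or just breadth-first search since $G$ is unweighted) from $i$ on the current graph $G_{i,r-1}=(V,E\setminus(T_{i,1}\cup\cdots\cup T_{i,r-1}))$; this costs $\cO(|E|+n\log n)$ and returns, \emph{simultaneously for all targets} $j$, the distance $\mathrm{dist}_r(i,j)$ (recorded as $|\cpij^r|$, or $\infty$ when $j$ has become unreachable, in which case fewer than $k$ disjoint paths exist) together with a shortest-path tree $T_{i,r}$ encoded by parent pointers. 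Since $T_{i,r}$ has at most $n-1$ edges, deleting it to obtain $G_{i,r}$ and writing down the $n$ new distances each cost $\cO(n)$. The point that makes the stated bound attainable is that one shortest-path call per round serves all $n-1$ targets at once: because $T_{i,r+1}$ shares no edge with $T_{i,1}\cup\cdots\cup T_{i,r}$, the tree path $\cpij^{r+1}$ to each $j$ is edge-disjoint from that node's earlier paths, so the rounds are never re-run per target. Summing, source $i$ costs $\cO(k(|E|+n\log n))$, over all $n$ sources $\cO(k|E|n+kn^2\log n)$, and adding the $\cO(kn^2)$ assembly step leaves the bound unchanged.

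The step I expect to be the main obstacle is the bookkeeping that certifies this shared-per-round scheme against the target-specific edge removal in the third bullet before Definition \ref{def:path-diffusion-kernel}: one must verify that deleting the \emph{whole} round-$r$ tree — equivalently $\bigcup_j \Eij^r$, since every edge of a shortest-path tree lies on the path to some vertex — produces, for each pair, edge-disjoint shortest paths in a subgraph of the graph named in the definition, and then reconcile the (mild) discrepancy this introduces, together with the tie-breaking convention illustrated by the example of Figure \ref{fig:example-graphs}-b, with the statement. A secondary, purely technical point is to fix the cost model for evaluating the path-time CDFs so that it is genuinely $\cO(1)$ amortised (hence absorbed) rather than $\cO(|\cpij^r|)$, which would otherwise contribute an unwanted $\cO(kn^3)$ term; this is exactly where the $\cO(n)$ length-indexed precomputation is used.
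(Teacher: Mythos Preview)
Your approach matches the paper's: for each of the $n$ candidate sources, run $k$ rounds of Dijkstra's algorithm at $\cO(|E|+n\log n)$ per round, then multiply out. The paper's proof is a two-line sketch that simply invokes Dijkstra without addressing the per-target versus whole-tree edge-removal discrepancy or the CDF-evaluation cost you flag, so your treatment is in fact more careful than the original.
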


\begin{remark}
To solve the source inference problem, one only needs to compute rows of the path-based network diffusion kernel which correspond to infected nodes ($i\in V^{t}$). Thus, time complexity of kernel computation can be reduced to $\cO(k|E||\VI^{t}|+k|\VI^{t}|n\log(n))$, where $|\VI^{t}|$ is the number of observed infected nodes in the network at time $t$.
\end{remark}
\noindent

Computation of the path-based network diffusion kernel depends on edge holding time distributions. If the virus traveling time variables $\cT_{(i,j)}$ are identically distributed for all edges in the network, the underlying diffusion process is called {\it homogeneous}. On the other hand, if holding time distributions differ among edges in the network, the resulting diffusion process is {\it heterogeneous}. In this section, we consider a homogeneous diffusion setup, where the holding time distribution is an exponential distribution with parameter $\lambda$ for all edges. Without loss of generality, we assume that $\lambda=1$.
It is important to note that the proposed framework is not limited to use of either the exponential or a single delay distribution.
Other edge holding time distributions can be used in the framework as well. For example, the case of heterogeneous diffusion is considered in Section \ref{subsec:hetro-NI}.

Under the setup considered in this section, the virus traveling time over each path in the network has an Erlang distribution,
because it is the sum of independent exponential variables. Thus, we have that
\begin{align}\label{eq:erlang-kernel}
\Fijr(t)=Pr[\cT_{\cpij^r}\leq t]=\frac{\gamma(|\cpij^r|,\lambda t)}{(|\cpij^r|-1)!},
\end{align}
where $\gamma(\cdot)$ is the lower incomplete gamma function \cite{bertsekas2002introduction}. $|\cpij^r|$ (the path length connecting node $i$ to $j$) is also called the Erlang's shape parameter. Because $\Fijr(t)$ is only a function of the path length and parameter $t$, to simplify notation, we define
\begin{align}\label{eq:erlang-simple-notation}
F(l,t)\triangleq\Fijr(t),
\end{align}
where $l=|\cpij^r|$. The $k$-path network diffusion kernel of Definition \ref{def:path-diffusion-kernel} using the Erlang distribution is called a path-based Erlang network diffusion kernel. If only one shortest path among nodes is considered (i.e., $k=1$), the diffusion kernel of Definition \ref{def:path-diffusion-kernel} is called the shortest path network diffusion kernel.

\begin{definition}[Shortest path Erlang diffusion kernel]\label{def:Erlang-kernel-shortest-path}
The shortest path Erlang network diffusion kernel is defined as follows:

\begin{align}\label{eq:shortestpath-erlang-diffusion-kernel}
p_{i,j}(t)= Pr\big[y_j(t)=1|y_i(0)=1\big]= F(d_{i,j},t),
\end{align}
\noindent
where $d_{i,j}$ is the length of the shortest path connecting node $i$ to node $j$, and $F(d_{i,j},t)$ represents the Erlang cumulative distribution function of \eqref{eq:erlang-simple-notation}.
\end{definition}
\noindent
The shortest path Erlang diffusion kernel can be viewed as the first order approximation of the underlying diffusion process. It has the least computational complexity among other path-based network diffusion kernels which makes it suitable to be used over large and complex networks. Moreover, this kernel has a single parameter $t$ which can be learned reliably using the observed samples (see Section \ref{subsec:parameters}).
\subsection{Global Source Inference}\label{subsec:NI-global}
In this section, we describe a source inference method called Network Infusion (NI) which aims to solve the inverse diffusion problem over a given network using observed infection patterns. The method described in this section finds a single node as the source of the {\it global} information spread in the network. In Section \ref{subsec:NI-multiple-source}, we consider the case when more than one source node exists, where each source causes a {\it local} infection propagation in the network. In this section, we also assume that the infection pattern is observed at a single snapshot at time $t$ (i.e., $\by(t)$ is given). The case of having multiple snapshots is considered in Appendix \ref{SIsec:multi-snap-shot}.

\noindent
Recall that $\VI^{t}$ is the set of observed infected nodes at time $t$, and $P(t)=[p_{i,j}(t)]$ represents the path-based network diffusion kernel according to Definition \ref{def:path-diffusion-kernel}. Under the path-based network diffusion kernel, the joint diffusion
probability distribution can be decoupled into individual marginal
distributions, which leads to a tractable ML Optimization, even for
complex networks:

\begin{alg}\label{alg:NI-ML-single}
A maximum-likelihood NI algorithm (NI-ML) infers the source node by solving the following
optimization:
\begin{align}\label{opt:NI-ML}
&\arg\max_{i\in\VI^{t}} \cL(i,t)\\
=&\arg\max_{i\in\VI^{t}} \sum_{j\in\VI^{t}} \log\big(p_{i,j}(t)\big)+\sum_{j\notin\VI^{t}} \log\big(1-p_{i,j}(t)\big).\nonumber
\end{align}
\end{alg}

In Optimization \eqref{opt:NI-ML}, $\cL(i,t)$ is called the NI log-likelihood function (score) of node $i$ at time $t$. In this optimization, the parameter $t$ is the time at which the observation is made and is assumed to be known. For cases that this parameter is unknown, we introduce techniques to learn a within-diffusion-model time parameter with provable performance guarantees in various setups (Section \ref{subsec:parameters}).

To have a well-defined NI log-likelihood objective function, $p_{i,j}(t)$ should be non-zero for infected nodes $i$ and $j$ (i.e.,  $p_{i,j}(t)\neq 0$ when $i,j\in\VI^{t}$). If infected nodes form a strongly connected sub-graph over the network, this condition is always satisfied. In practice, if $p_{i,j}(t)= 0$ for some $i,j\in\VI^{t}$ (i.e., $i$ and $j$ are disconnected in the graph), we assume that $p_{i,j}(t)=\eps$ (e.g., $\eps=10^{-6}$). Note that for $j\notin\VI^{t}$, for any value of $t>0$, $p_{i,j}(t)<1$. Therefore, the second term in the summation of Optimization \eqref{opt:NI-ML} is always well-defined.

\noindent
NI Algorithm \ref{alg:NI-ML-single} aims to infer the source node by maximizing $\cL(i,t)$. An alternative approach is to infer the source node by minimizing the expected prediction error of the observed infection pattern. We describe this approach in the following:
\noindent
Let $h_{\alpha}(\by,\bx)$ be a weighted Hamming premetric between two binary sequences $\bx$ and $\by$ defined as follows:

\begin{align}
h_{\alpha}(\by,\bx)\triangleq (1-\alpha) \sum_{i:y_i=1} \bbI_{x_i=0}+ \alpha \sum_{i:y_i=0} \bbI_{x_i=1},
\end{align}
\noindent
where $0\leq \alpha\leq 1$.
\noindent
If $\alpha=1/2$, $h_{\alpha}(.,.)$ is a metric distance. If $\alpha\neq 1/2$, $h_{\alpha}(.,.)$ is a premetric (not a metric) because it does not satisfy the symmetric property of distance metrics (i.e., there exist $\bx$ and $\by$ such that $h_{\alpha}(\bx,\by)\neq h_{\alpha}(\by,\bx)$), and it does not satisfy the triangle inequality as well (i.e., there exist $\bx$, $\by$ and $\bz$ such that $h_{\alpha}(\bx,\by)> h_{\alpha}(\by,\bz)+h_{\alpha}(\bz,\bx)$).

\begin{remark}\label{remark:alpha}
\textup{$h_{\alpha}(\by,\bx)$ generalizes Hamming distance between binary sequences $\by$ and $\bx$ using different weights for different error types. Suppose $\bx$ is a prediction of the sequence $\by$. There are two types of possible errors: (1) if $y_i=1$ and the prediction is zero (i.e., $x_i=0$, false negative error), (2) if $y_i=0$ and the prediction is one (i.e., $x_i=1$, false positive error). $h_{\alpha}(\by,\bx)$ combines these errors by assigning weight $1-\alpha$ to false negatives (i.e., missing ones), and weight $\alpha$ to false positives (i.e., missing zeros). Having different weights for different error types can be useful, specially if the sequence $\by$ is sparse. Suppose $\by$ has $\kappa$ ones (positives), and $n-\kappa$ zeros (negatives). Therefore, there are $\kappa$ possible type 1 errors and $n-\kappa$ type 2 errors in prediction. In this case, to have balance between the number of true negative and false positive errors, one can choose $\alpha=\kappa/n$ in calculation of $h_{\alpha}(\by,\bx)$.}
\end{remark}
\noindent
Now we introduce a NI algorithm which infers the source node by minimizing the prediction error.

\begin{alg}\label{alg:NI-ME-single}
A minimum  error NI algorithm (NI-ME) infers the source node by solving the following optimization:
\begin{align}\label{opt:NI-ME}
& \arg\min_{i\in\VI^{t}} \cH_{\alpha}(i,t)=\arg\min_{i\in\VI^{t}} \bbE[h_{\alpha}(\by(t),\bx_i(t))]\\
=& \arg\min_{i\in\VI^{t}} (1-\alpha)\sum_{j\in\VI^{t}}\big(1-p_{i,j}(t)\big)+\alpha\sum_{j\notin\VI^{t}} p_{i,j}(t),\nonumber
\end{align}
\noindent
where $\bx_i(t)$ is a binary prediction vector of node $i$ at time $t$ with probability distribution $P_i(t)$, and $\cH_{\alpha}(i,t)$ is the expected prediction error of node $i$ at time $t$.
\end{alg}
\noindent
Similarly to Maximum Likelihood NI (NI-ML) Algorithm, we assume that the parameter $t$ (the time at which observation is made) is known. We discuss the case when this parameter is unknown in Section \ref{subsec:parameters}.

\begin{remark}\label{remark:alpha2}
\textup{According to Remark \ref{remark:alpha}, to have balance between false positive and false negative error types, one can use $\alpha=|\VI^{t}|/n$ where $|\VI^{t}|$ is the number of infected nodes (positives) at time $t$. However, in general, this parameter can be tuned in different applications using standard machine learning techniques such as cross validations \cite{cross-validation}.}
\end{remark}
\noindent
The proposed NI methods based on maximum likelihood (NI-ML, Algorithm \ref{alg:NI-ML-single}) and minimum error (NI-ME, Algorithm \ref{alg:NI-ME-single}) are efficient to solve even for large complex networks:

\begin{proposition}\label{prop:complexity-NI}
Suppose the underlying network $G=(V,E)$ has $n$ nodes and $|E|$ edges. Let $\VI^{t}$ represent the set of infected nodes at time $t$. Then, a worst case computational complexity of NI Algorithms \ref{alg:NI-ML-single} and \ref{alg:NI-ME-single} is $\cO\big(|\VI^{t}|(k|E|+kn\log(n))\big)$.
\end{proposition}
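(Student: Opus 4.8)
The plan is to split the running time into two stages --- computing the part of the path-based network diffusion kernel that the algorithms actually consult, and then evaluating and optimizing the objective over the candidate set $\VI^{t}$ --- and to argue that the first stage dominates.

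First I would observe that both Optimization \eqref{opt:NI-ML} and Optimization \eqref{opt:NI-ME} only reference entries $p_{i,j}(t)$ with $i\in\VI^{t}$; that is, only the rows of $P(t)$ indexed by infected nodes are needed. By the Remark following Proposition \ref{prop:complexity-diffusion-kernel}, these rows are computable in time $\cO\big(k|E||\VI^{t}|+k|\VI^{t}|n\log(n)\big)$, which is exactly $\cO\big(|\VI^{t}|(k|E|+kn\log(n))\big)$. Concretely, for each fixed $i\in\VI^{t}$ one runs $k$ rounds of single-source shortest paths from $i$ --- each round a Fibonacci-heap Dijkstra at cost $\cO(|E|+n\log(n))$, with the edges of the previously selected shortest-path structure deleted between rounds --- and then forms $p_{i,j}(t)=1-\prod_{r=1}^{k}\big(1-\Fijr(t)\big)$, where each $\Fijr(t)=F(|\cpij^r|,t)$ is a single evaluation of the Erlang (lower incomplete gamma) CDF treated as an $\cO(1)$ primitive.

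Next I would bound the optimization stage. Fixing $i\in\VI^{t}$, the score $\cL(i,t)$ of \eqref{opt:NI-ML} (respectively $\cH_{\alpha}(i,t)$ of \eqref{opt:NI-ME}) is a sum of exactly $n$ terms --- one per node $j$, partitioned according to whether $j\in\VI^{t}$ --- each a constant-time arithmetic operation once the relevant $p_{i,j}(t)$ is in hand. So one evaluation costs $\cO(n)$, all $|\VI^{t}|$ evaluations cost $\cO(|\VI^{t}|n)$, and extracting the $\arg\max$ (respectively $\arg\min$) adds $\cO(|\VI^{t}|)$. Summing the two stages gives $\cO\big(|\VI^{t}|(k|E|+kn\log(n))\big)+\cO(|\VI^{t}|n)$, and since $kn\log(n)=\Omega(n)$ (and $|E|\ge n-1$ whenever the graph is connected), the $\cO(|\VI^{t}|n)$ term is absorbed, leaving the claimed bound.

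I do not expect a real obstacle here: the substantive work --- that $k$ edge-disjoint shortest paths from a source to all targets cost $k$ Dijkstra rounds --- is already carried by Proposition \ref{prop:complexity-diffusion-kernel} and its Remark, which may be invoked directly, so the only thing to verify is the (easy) term-count showing the optimization stage does not dominate. One minor point to handle is the degenerate case $p_{i,j}(t)=0$ for some $i,j\in\VI^{t}$: the substitution $p_{i,j}(t)=\eps$ noted after Algorithm \ref{alg:NI-ML-single} is an $\cO(1)$ patch during kernel assembly and changes none of the above counts.
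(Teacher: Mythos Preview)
Your proposal is correct and follows essentially the same approach as the paper: invoke Proposition~\ref{prop:complexity-diffusion-kernel} (and its Remark) to bound the kernel computation over the $|\VI^{t}|$ candidate rows by $\cO\big(|\VI^{t}|(k|E|+kn\log(n))\big)$, note that evaluating the objectives over all candidates costs $\cO(|\VI^{t}|n)$, and absorb the latter into the former. The paper's own proof is a terse two-sentence version of exactly this; your additional remarks on Dijkstra, the Erlang evaluation, and the $\eps$ patch are correct elaborations but not required.
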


\noindent
In the rest of this section, we analyze the performance of NI Algorithms \ref{alg:NI-ML-single} and \ref{alg:NI-ME-single} under a standard SI diffusion model of Definition \ref{def:SI-model}.

\begin{theorem}\label{thm:optimality-NI-ML}
Let $G=(V,E)$ be an undirected tree with countably infinite nodes. Suppose node $s$ is the source node, $t$ is the infection observation time, and the underlying diffusion process is according to the SI model of Definition \ref{def:SI-model}. Then, we have,
\begin{align}\label{eq:NI-ML-inequality}
\bbE[\cL(s,t)]\geq \bbE[\cL(i,t')], \quad \forall i,\forall t',
\end{align}
\noindent
where $\bbE[\cL(i,t')]$ is the expected NI log-likelihood score of node $i$ with parameter $t'$.
\end{theorem}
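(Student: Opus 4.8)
The plan is to collapse the inequality, once the tree structure is used, to an elementary pointwise cross-entropy bound. First I would rewrite the NI score of Algorithm \ref{alg:NI-ML-single} as a sum over all nodes: with $y_j(t)\in\{0,1\}$ the indicator that $j$ is infected at time $t$,
\begin{equation}\label{eq:prop-score-rewrite}
\cL(i,t') \;=\; \sum_{j}\Big( y_j(t)\log p_{i,j}(t') + \big(1-y_j(t)\big)\log\big(1-p_{i,j}(t')\big)\Big),
\end{equation}
so that, by linearity of expectation, $\bbE[\cL(i,t')] = \sum_{j}\big( q_j \log p_{i,j}(t') + (1-q_j)\log(1-p_{i,j}(t'))\big)$, where $q_j \triangleq \bbE[y_j(t)] = \Pr[\,y_j(t)=1 \mid y_s(0)=1\,]$ is the SI infection probability of node $j$ from the true source $s$.

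The key structural step is to evaluate $q_j$. Since $G$ is a tree there is a unique path from $s$ to $j$, and under the SI model of Definition \ref{def:SI-model} the infection reaches $j$ only along that path; hence $\cT_j$ is the sum of the $d_{s,j}$ i.i.d.\ unit-rate exponential holding times along it, so $q_j = \Pr[\cT_j\le t] = F(d_{s,j},t)$. On the same tree the path-based kernel of Definition \ref{def:path-diffusion-kernel} also reduces to $p_{i,j}(t)=F(d_{i,j},t)$ (a tree has no two edge-disjoint $i$--$j$ paths), so in particular $q_j = p_{s,j}(t)$: driven from the true source, the kernel coincides \emph{exactly} with the SI marginals, which is precisely why the disjoint-path relaxation is lossless here. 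Therefore
\[
\bbE[\cL(i,t')] \;=\; \sum_{j} g\big(p_{i,j}(t'),\,p_{s,j}(t)\big),\qquad g(p,q)\triangleq q\log p+(1-q)\log(1-p).
\]
I would then invoke the elementary fact that for each fixed $q\in[0,1]$ the map $p\mapsto g(p,q)$ is concave on $(0,1)$ and attains its maximum at $p=q$ (equivalently, nonnegativity of the binary Kullback--Leibler divergence). Applying this termwise with $q=p_{s,j}(t)$ gives $g(p_{i,j}(t'),p_{s,j}(t))\le g(p_{s,j}(t),p_{s,j}(t))$ for every $j$, and summing over $j$ yields $\bbE[\cL(i,t')]\le\sum_{j} g(p_{s,j}(t),p_{s,j}(t))=\bbE[\cL(s,t)]$, which is \eqref{eq:NI-ML-inequality}.

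The remaining work, and the only real obstacle, is the bookkeeping forced by the countably infinite vertex set. I would (i) justify the interchange of expectation and the infinite sum in \eqref{eq:prop-score-rewrite} by splitting off the nonpositive $\log(1-p_{i,j}(t'))$ terms (monotone convergence) from the $\log p_{i,j}(t')$ terms, and (ii) check that $\sum_{j} g(p_{s,j}(t),p_{s,j}(t))$ is finite: for a locally finite tree the number of nodes at distance $d$ from $s$ grows at most exponentially in $d$, whereas $F(d,t)$ decays super-exponentially in $d$, so $|g(F(d,t),F(d,t))|\lesssim F(d,t)\,|\log F(d,t)|$ is summable against that growth. The degenerate limits $t'\to 0$ and $t'\to\infty$ make $\bbE[\cL(i,t')]=-\infty$ (infinitely many $j$ with, respectively, $p_{s,j}(t)>0$ and $1-p_{s,j}(t)>0$ multiplying a $\log 0$), so \eqref{eq:NI-ML-inequality} is trivial there. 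The core inequality itself is immediate once the tree structure identifies the kernel with the true marginals, so I expect no difficulty beyond this convergence/interchange accounting.
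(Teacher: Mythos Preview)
Your proof is correct and rests on the same identity the paper uses: on a tree the SI marginal $\Pr[y_j(t)=1\mid s]$ equals the path-based kernel $p_{s,j}(t)$, so
\[
\bbE[\cL(s,t)]-\bbE[\cL(i,t')]=\sum_{j}\Big(p_{s,j}(t)\log\tfrac{p_{s,j}(t)}{p_{i,j}(t')}+(1-p_{s,j}(t))\log\tfrac{1-p_{s,j}(t)}{1-p_{i,j}(t')}\Big),
\]
a sum of binary KL divergences. From here the two arguments diverge slightly. You invoke the Gibbs inequality \emph{termwise} (each summand is $D_{\mathrm{KL}}(\mathrm{Bern}(p_{s,j}(t))\,\|\,\mathrm{Bern}(p_{i,j}(t')))\ge 0$), which is the shortest route and also gives the cleanest equality characterization: equality forces $p_{i,j}(t')=p_{s,j}(t)$ for every $j$. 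The paper instead applies the log-sum inequality to collapse the sum into a single aggregated term of the form $x\log(x/y)+(z-x)\log((z-x)/(z-y))$ with $x=\sum_j p_{s,j}(t)$, $y=\sum_j p_{i,j}(t')$, $z=|V|$, and then appeals to a separate lemma showing that this scalar expression is nonnegative. Your route is more elementary and loses no information; the paper's aggregation step is not needed for the inequality itself (though it is what drives their equality condition in Corollary~\ref{prop:robustness-NI-ML}, where matching the aggregate $\sum_j p_{s,j}(t)=\sum_j p_{s,j}(t')$ forces $t=t'$). Your attention to the interchange of expectation and the infinite sum, and to finiteness of $\bbE[\cL(s,t)]$, is a genuine addition: the paper does not address these points.
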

In the setup of Theorem \ref{thm:optimality-NI-ML}, similarly to the setup of reference \cite{shah2011rumors}, we assume that the set of vertices is countably infinite to avoid boundary effects. Theorem \ref{thm:optimality-NI-ML} provides a mean-field (expected) optimality for Algorithm \ref{alg:NI-ML-single}. In words, it considers the case when we have sufficient samples from independent infection spreads in the network starting from the same source node.

Note that the function $\cL(i,t)$ provides a statistics of node $i$ being the source node. Theorem \ref{thm:optimality-NI-ML} considers the expectation of this function under the SI diffusion model. Since both expectation and summation are linear operations, the graph structure is tree, and in $\cL(i,t)$ nodes have been decoupled, computing this expectation under the SI model and the path-based network diffusion kernel leads to the same argument (for more details, see the proof of the Theorem). However by considering the expectation under the path-based network diffusion kernel, the argument of Theorem \ref{thm:optimality-NI-ML} holds for a general graph structure.

In \eqref{eq:NI-ML-inequality}, $i$ can be equal to $s$ (the source node), and/or $t'$ can be equal to $t$ as well. If $i$ is equal to $s$, we have the following:

\begin{corollary}\label{prop:robustness-NI-ML}
Under the conditions of Theorem \ref{thm:optimality-NI-ML}, we have,

\begin{align}
\bbE[\cL(s,t)]\geq \bbE[\cL(s,t')], \quad \forall t',
\end{align}
\noindent
where the equality holds iff $t=t'$.
\end{corollary}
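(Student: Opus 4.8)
The inequality itself requires nothing new: it is the special case $i=s$ of Theorem~\ref{thm:optimality-NI-ML}, which already gives $\bbE[\cL(s,t)]\geq\bbE[\cL(s,t')]$ for every $t'$, and the direction ``$t=t'\Rightarrow$ equality'' is immediate since the two sides are then literally the same expression. So the only real content is the strictness when $t'\neq t$. The plan is to write $\bbE[\cL(s,t')]$ explicitly as a function of $t'$ and recognize the gap $\bbE[\cL(s,t)]-\bbE[\cL(s,t')]$ as a sum of Bernoulli Kullback--Leibler divergences, which are nonnegative and vanish exactly when their two parameters coincide.

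First I would record the two infection probabilities that enter the computation. Because $G$ is a tree, the path from $s$ to any node $j$ is unique and has length $d_{s,j}$, so the path-based kernel of Definition~\ref{def:path-diffusion-kernel} satisfies $p_{s,j}(t')=F(d_{s,j},t')$ exactly; and under the SI model of Definition~\ref{def:SI-model} on a tree, $\cT_j$ is the sum of the independent $\mathrm{Exp}(1)$ holding times along that same path, so $\Pr[y_j(t)=1\mid y_s(0)=1]=F(d_{s,j},t)$. Since the only randomness in the NI score of Algorithm~\ref{alg:NI-ML-single} enters through the infected set $\VI^{t}$, linearity of expectation then gives, with $a_j:=F(d_{s,j},t)$ and $b_j:=F(d_{s,j},t')$,
\begin{align}
\bbE[\cL(s,t')]=\sum_{j}\big[\,a_j\log b_j+(1-a_j)\log(1-b_j)\,\big],\nonumber
\end{align}
and subtracting the $t'=t$ version yields
\begin{align}
\bbE[\cL(s,t)]-\bbE[\cL(s,t')]=\sum_{j}\Big[a_j\log\tfrac{a_j}{b_j}+(1-a_j)\log\tfrac{1-a_j}{1-b_j}\Big]=\sum_{j}D_{\mathrm{KL}}\!\big(\mathrm{Bern}(a_j)\,\|\,\mathrm{Bern}(b_j)\big).\nonumber
\end{align}
This cross-entropy decomposition is the same one that underlies the proof of Theorem~\ref{thm:optimality-NI-ML}, so I would reuse that bookkeeping.

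Then I would close the argument with Gibbs' inequality: every summand is nonnegative and equals $0$ iff $a_j=b_j$, so the gap is $0$ iff $F(d_{s,j},t)=F(d_{s,j},t')$ for all $j$. The $j=s$ term is identically zero since $d_{s,s}=0$, but the tree is infinite and hence contains nodes at every positive distance from $s$; for shape $d\geq 1$ the Erlang CDF $F(d,\cdot)$ is strictly increasing on $(0,\infty)$, so for any such node $F(d_{s,j},t)=F(d_{s,j},t')$ forces $t=t'$. Therefore the gap is strictly positive whenever $t'\neq t$, which is exactly the claim that equality holds iff $t=t'$.

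I expect the only genuinely delicate step to be the handling of the countable sums: the series $\sum_j a_j\log b_j$ and $\sum_j(1-a_j)\log(1-b_j)$ need not converge individually, so the KL rewriting must be performed on the difference, where the terms all share a sign — precisely the summability argument already made in the proof of Theorem~\ref{thm:optimality-NI-ML}. Everything else is elementary (nonnegativity of Bernoulli relative entropy and strict monotonicity of the Erlang distribution function), so the proof essentially amounts to unpacking the equality case of that theorem.
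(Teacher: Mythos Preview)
Your argument is correct and in fact cleaner than the paper's. Both proofs start by writing
\[
\bbE[\cL(s,t)]-\bbE[\cL(s,t')]=\sum_{j}\Big[p_{s,j}(t)\log\tfrac{p_{s,j}(t)}{p_{s,j}(t')}+(1-p_{s,j}(t))\log\tfrac{1-p_{s,j}(t)}{1-p_{s,j}(t')}\Big],
\]
but then diverge. You observe that each summand is already a Bernoulli relative entropy and conclude nonnegativity termwise via Gibbs' inequality, reading off the equality case from strict monotonicity of $F(d,\cdot)$ for $d\geq 1$. The paper instead first applies the log-sum inequality to collapse the sum into a single two-term expression of the form $x\log(x/y)+(z-x)\log((z-x)/(z-y))$ with $x=\sum_j p_{s,j}(t)$, $y=\sum_j p_{s,j}(t')$, and then invokes a separate Lemma (their Lemma~\ref{lem:log-inequality}) to show that aggregate quantity is nonnegative, with equality iff $\sum_j p_{s,j}(t)=\sum_j p_{s,j}(t')$; from the latter it concludes $t=t'$. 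Your route avoids the log-sum detour entirely and gives a tighter termwise equality condition; it is also more careful about the countable-sum issue, which the paper does not address. The paper's route, on the other hand, is what allows it to handle Theorem~\ref{thm:optimality-NI-ML} for $i\neq s$ in the same stroke, since after aggregation the asymmetry between $p_{s,j}$ and $p_{i,j}$ disappears into the scalar $y$.
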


\begin{remark}\label{remark:NI-ML-t-parameter}
\textup{In this remark, we highlight the difference between parameters $t$ and $t'$ in Theorem \ref{thm:optimality-NI-ML}.
The parameter $t$ is the time at which we observe the infection pattern in the network. If this parameter is known, it can be used to compute likelihood scores according to Optimization \eqref{opt:NI-ML}. However, this parameter may be unknown and one may use an estimate of this parameter in Optimization \eqref{opt:NI-ML} (i.e., using $t'$ instead of $t$). Theorem \ref{thm:optimality-NI-ML} indicates that even if different parameters $t'\neq t$ are used to compute source likelihood scores for different nodes, the likelihood score obtained by the source node $s$ and the true parameter $t$ is optimal in expectation. This theorem and corresponding Proposition \ref{prop:robustness-NI-ML} provide a theoretical basis to estimate the underlying true parameter $t$ by maximizing the likelihood score for each node over different values of $t'$ (for more details, see Section \ref{subsec:parameters}).}
\end{remark}

In the following, we present the mean-field optimality of minimum error NI algorithm (NI-ME) over regular tree structures:

\begin{theorem}\label{thm:optimality-NI-ME}
Let $G=(V,E)$ be a regular undirected tree with countably infinite nodes. Suppose node $s$ is the source node, $t$ is the observation time, and the underlying diffusion process is according to the SI model of Definition \ref{def:SI-model}. Then, for any value of $0<\alpha<1$ and $t'>0$, we have,
\begin{align}\label{eq:NI-ME-inequality}
\bbE[H_{\alpha}(s,t')]<\bbE[H_{\alpha}(i,t')], \quad \forall i\neq s, \forall t'>0,
\end{align}
\noindent
where $\bbE[H_{\alpha}(i,t')]$ is the expected prediction error of node $i$ using parameter $t'$. Equality \eqref{eq:NI-ME-inequality} holds iff $s=i$.
\end{theorem}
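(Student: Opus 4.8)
The plan is to proceed exactly as in the proof of Theorem \ref{thm:optimality-NI-ML}, using the fact that on a tree the distance $d_{i,j}$ equals the unique path length, so the path-based kernel with $k=1$ is exact: $p_{i,j}(t')=F(d_{i,j},t')$. Because $G$ is a \emph{regular} tree (say of degree $\delta$), the number of nodes at distance $d$ from any fixed node is a fixed quantity $N_d$ independent of the node, and this homogeneity is what makes the minimum-error objective tractable. First I would write $\bbE[H_{\alpha}(i,t')] = (1-\alpha)\sum_{j}\bbE[y_j(t)](1-p_{i,j}(t')) + \alpha\sum_j (1-\bbE[y_j(t)])p_{i,j}(t')$, where the expectation is over the SI diffusion from the true source $s$ at the true time $t$. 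The key observation, inherited from the Theorem \ref{thm:optimality-NI-ML} analysis, is that under the SI model on a tree the marginal $\bbE[y_j(t)] = \Pr[\cT_j \le t]$ depends only on $d_{s,j}$; write it as $q(d_{s,j})$, a strictly decreasing function of distance. So $\bbE[H_{\alpha}(i,t')]$ becomes a sum over $j$ of a function of the pair $(d_{s,j}, d_{i,j})$, and after grouping nodes by these two distances it reduces to a finite combinatorial sum depending only on $d_{s,i}$.

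The heart of the argument is then to show this reduced function of $d \triangleq d_{s,i}$ is strictly minimized at $d=0$. I would define $g(d) \triangleq \bbE[H_{\alpha}(i,t')]$ for any node $i$ at distance $d$ from $s$, and show $g(d+1) > g(d)$ for all $d \ge 0$ by examining the telescoping difference $g(d+1)-g(d)$. Moving the candidate source one step farther from $s$ along the tree changes, for each node $j$, the term $p_{i,j}(t')=F(d_{i,j},t')$; since $F(\cdot,t')$ is strictly decreasing in its first argument and the diffusion weights $q(d_{s,j})$ are also strictly decreasing in $d_{s,j}$, the rearrangement-type inequality forces the increment to be strictly positive. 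Concretely, nodes that get closer to $i$ when $i$ moves away from $s$ are (in the tree) exactly those that are already far from $s$ and hence have small $q(d_{s,j})$, while nodes that get farther from $i$ are the ones near $s$ with large $q$; weighting by $(1-\alpha)$ versus $\alpha$ and summing, the net effect is a strict increase. Establishing the sign of this difference cleanly for \emph{every} $\alpha\in(0,1)$ and \emph{every} $t'>0$ is the step I expect to be the main obstacle, since one must handle both the false-negative term (weighted $1-\alpha$, summing $1-p_{i,j}$ over infected-in-expectation nodes) and the false-positive term (weighted $\alpha$, summing $p_{i,j}$ over the rest) and verify they do not cancel.

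To control that difference I would exploit the regular-tree structure explicitly: partition $V$ according to where the path from $s$ to the moved node $i$ branches, so that when $i$ shifts from distance $d$ to $d+1$, the map $j \mapsto d_{i,j}$ changes by $+1$ on an entire subtree and by $-1$ on its complement within a shell, with the shell sizes given by the fixed $N_d$. This makes $g(d+1)-g(d)$ an explicit alternating sum of terms $N_d\,[\,(1-\alpha)q(\cdot)\mp\alpha(1-q(\cdot))\,]\,[F(\cdot,t') - F(\cdot\pm 1,t')]$; using the strict monotonicity $F(l,t') > F(l+1,t')$ together with $q$ strictly decreasing, each paired contribution is strictly positive, and summing gives $g(d+1)>g(d)$. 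This yields $\bbE[H_{\alpha}(s,t')] = g(0) < g(d) = \bbE[H_{\alpha}(i,t')]$ for all $i\ne s$, with equality only when $d=0$, i.e. $i=s$, which is the claim. Finally, I would note that $t'$ enters only through the fixed decreasing profile $F(\cdot,t')$, so the argument is uniform in $t'>0$, and since the SI marginals $q(\cdot)$ are independent of $t'$, no relation between $t$ and $t'$ is needed --- matching the statement of the theorem.
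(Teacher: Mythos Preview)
Your approach is different from the paper's, and the step you flag as ``the main obstacle'' is precisely where the paper takes a cleaner route. You telescope in $d=d_{s,i}$ and try to show $g(d+1)>g(d)$; this forces you to control terms of the form $[q(d_{s,j})-\alpha]\,[F(d_{i,j},t')-F(d_{i,j}\pm 1,t')]$, which do \emph{not} individually have a definite sign (the sign of $q(d_{s,j})-\alpha$ flips depending on whether $j$ is near or far from $s$ relative to the threshold set by $\alpha$). Your sketch of a pairing that makes ``each paired contribution strictly positive'' is not spelled out, and it is not clear a node-level pairing with that property exists in the incremental comparison.

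The paper avoids the telescoping entirely and compares $s$ and $i$ \emph{directly} via a symmetry of the regular tree. It shows that a regular infinite tree is \emph{distance-symmetric}: for any two vertices $s,i$ there is a partition $V=V_1\cup V_2\cup V_3$ with $d(s,j)=d(i,j)$ on $V_1$ and a bijection $\zeta:V_2\to V_3$ satisfying $d(s,j)=d(i,\zeta(j))$ and $d(i,j)=d(s,\zeta(j))$. Writing the difference as
\[
\bbE[\cH_\alpha(i,t')]-\bbE[\cH_\alpha(s,t')]=\sum_{j\in V}\big(p_{s,j}(t')-p_{i,j}(t')\big)\big(p_{s,j}(t)-\alpha\big),
\]
the $V_1$ part vanishes, and pairing $j\in V_2$ with $j'=\zeta(j)\in V_3$ swaps the roles of $s$ and $i$, so the two $\alpha$'s cancel algebraically and the pair contributes $\big(p_{s,j}(t')-p_{i,j}(t')\big)\big(p_{s,j}(t)-p_{i,j}(t)\big)\ge 0$, strictly positive whenever $d_{s,j}\ne d_{i,j}$. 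This gives the strict inequality for every $\alpha\in(0,1)$ and every $t'>0$ in one stroke. Your monotonicity-in-$d$ program may ultimately be salvageable by discovering essentially this same bijection at each step, but as written the resolution of your ``main obstacle'' is a genuine gap; the direct symmetric pairing is both simpler and what makes the $\alpha$-independence transparent.
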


The mean field optimality of NI-ME algorithm holds for all values of $0<\alpha<1$ under the setup of Theorem \ref{thm:optimality-NI-ME}. In practice and under more general conditions, we find that $\alpha$ selection according to Remarks \ref{remark:alpha} and \ref{remark:alpha2} leads to a robust performance, owing to the balance between true negative and false positive errors (Sections \ref{sec:sim} and \ref{sec:digg}).

\begin{remark}\label{remark:NI-ME-t-parameter}
\textup{The NI-ML mean-field optimality of Theorem \ref{thm:optimality-NI-ML} holds even if different $t'$ values are used for different nodes. However, the mean-field optimality of the NI-ME method of Theorem \ref{thm:optimality-NI-ME} holds if the same $t'$ parameter is used for all nodes. Interestingly, even if the parameter used in the NI-ME algorithm is difference than the true observation time parameter (i.e., $t'\neq t$), the optimality argument of Theorem \ref{thm:optimality-NI-ME} holds which indicates the robustness of the method with respect to this parameter. Moreover, the NI-ME optimality of inequality \eqref{eq:NI-ME-inequality} is strict, while the one of NI-ML method according to the inequality \eqref{eq:NI-ML-inequality} may have multiple optimal solutions.}
\end{remark}

\subsection{Multi-source NI}\label{subsec:NI-multiple-source}

In this section, we consider the multi-source inference problem, where there exists $m$ sources in the network. We consider this problem when sources are sufficiently distant from each other and only a single snapshot, at time $t$, is available (i.e., $\by(t)$ is given). For simplicity of the analysis, we consider $k=1$ in the path-based network diffusion kernel (i.e., only shortest paths are considered).

Let $G=(V,E)$ be the underlying network where $d_{i,j}$ represents the length of the shortest path between node $i$ and node $j$. Define $D(i,R)\triangleq \{j\in V| d_{i,j}< R\}$ as a disk with radius $R$ centered at node $i$, which we refer to as the $R$-neighborhood of node $i$ in the network. Similarly, the union of disks with radius $R$ centered at nodes of the set $V_1\subset V$ is defined as $D(V_1,R)\triangleq \{j\in V| \exists i\in V_1, d_{i,j}< R\}$. We define the following distances in the network:

\begin{align}\label{eq:d0-d1}
d_{0}&\triangleq \arg\max_{d} F(d,t)>\frac{1}{2},\\
d_{1}^{\eps}&\triangleq \arg\min_{d} F(d,t)<\frac{\eps}{nm},\nonumber
\end{align}
\noindent
where $F(d,t)$ is defined according to \eqref{eq:erlang-simple-notation}.

\begin{definition}[$\eps$-Coherent Sources]\label{def:eps-coherent}
Let $G=(V,E)$ be a binary network. Sources $\cS=\{s_1,s_2,\ldots,s_m\}$ are $\eps$-coherent if,
\begin{align}
d(s_a,s_b)>2(d_0+d_1^{\eps}), \quad \forall 1\leq a,b\leq m, a\neq b,
\end{align}
where $d_0$ and $d_1$ are defined according to \eqref{eq:d0-d1}.
\end{definition}
\noindent
Intuitively, sources are incoherent if they are sufficiently distant from each other in the network so that their infection effects at time $t$ do not overlap in the network (for instance, viruses released from them, with high probability, have not visited the same nodes.). This assumption is a critical condition to solve the multi-source NI problem efficiently.

\begin{definition}[Multi-Source Network Diffusion Kernel]\label{def:kernel-multi-source}
Suppose $G=(V,E)$ is a possibly directed binary graph and there exist $m$ source nodes $\cS=\{s_1,\ldots,s_m\}$ in the network that are $\eps$-coherent. We say a node $j\notin\cS$ gets infected at time $t$ if it gets a virus from at least one of the sources. Thus, we have,

\begin{align}\label{eq:kernel-multiple-sources}
Pr[y_j(t)=1]\triangleq 1-\prod_{s\in\cS}\barp_{s,j}(t),
\end{align}
where $\barp_{s,j}(t)=1-p_{s,j}(t)$.
\end{definition}
\noindent
Using multi-source network diffusion kernel of Definition \ref{def:kernel-multi-source}, the log-likelihood function $\cL(\cS,t)$ and the Hamming error function $\cH_{\alpha}(\cS,t)$ are defined as follows:

\begin{align}\label{eq:multiple-likeli-error-functions}
\cL(\cS,t)\triangleq&\sum_{j\in\VI^{t}} \log(1-\prod_{s\in\cS}\barp_{s,j}(t))+\sum_{j\notin\VI^{t}} \log(\prod_{s\in\cS}\barp_{s,j}(t)),\\
\cH_{\alpha}(\cS,t)\triangleq&(1-\alpha)\sum_{j\in\VI^{t}} \prod_{s\in\cS}\barp_{s,j}(t)+\alpha\sum_{j\notin\VI^{t}} \big(1-\prod_{s\in\cS}\barp_{s,j}(t)\big).\nonumber
\end{align}
\noindent
Similarly to Algorithms \ref{alg:NI-ML-single} and \ref{alg:NI-ME-single}, NI aims to find a set of $m$ sources which maximizes the log-likelihood score, or minimizes the weighted Hamming error. However, unlike the single source case, these optimizations are computationally costly because all ${|\VI^{t}|\choose m}$ possible source combinations should be evaluated. If the number of infected nodes is significant ($|\VI^{t}|=\cO(n)$), even for small constant number of sources, one needs to compute the likelihood or error scores for approximately $\cO(n^m)$ possible source subsets, which may be computationally overly challenging for large networks.
\noindent
One way to solve this combinatorial optimization is to take an iterative approach, where, at each step, one source node is inferred. However, at each step, using single source NI methods may not lead to an appropriate approximation because single source NI methods aim to find the source node which explains the entire infection pattern in the network, while in the multi-source case, the entire infection pattern are caused by multiple sources. To avoid this problem, at each step, we use a localized version of NI methods developed in Algorithms \ref{alg:NI-ML-single} and \ref{alg:NI-ME-single}, where sources explain the infection pattern only around their neighborhood in the network.

\begin{definition}
The localized likelihood function of node $i$ in its $d_0$ neighborhood is defined as,
\begin{align}\label{eq:localized-likelihood-function}
\cL_{d_0}(i,t)\triangleq \sum_{\substack{
j\in\VI^{t}\\
j\in D(i,d_0)}} \log\big(p_{i,j}(t)\big)+\sum_{\substack{
j\notin\VI^{t}\\
j\in D(i,d_0)}} \log\big(1-p_{i,j}(t)\big),
\end{align}
where only nodes in the $d_0$ neighborhood of node $i$ is considered in likelihood computation.
\end{definition}
\noindent
A similar argument can be expressed for the localized Hamming prediction error. For large $d_0$ values, the localized likelihood score is similar to the global likelihood score of \eqref{opt:NI-ML}. Using localized likelihood function is important in the multi-source NI problem because source candidates cannot explain the infection pattern caused by other sources. In the following, we propose an efficient localized NI method to solve the multi-source inference problem by maximizing localized likelihood scores of source candidates using a greedy approach. A similar algorithm can be designed for the localized minimum error NI.

\begin{alg}[Multi-source NI-ML Algorithm]\label{alg:multi-source--NI}
Suppose $\cS_r$ is the set of inferred sources at iteration $r$. The localized NI-ML algorithm has the following steps:
\begin{itemize}
\item {\it Step 0:} $\cS_0=\emptyset$.
\item {\it Step r+1:}
\begin{itemize}
\item {\it Likelihood computation:} compute $s_{r+1}$ using the following optimization,
\begin{align}\label{opt:ml-multi-source-greedy}
s_{r+1}&=\arg\max_{i\in\VI^{t}-D(\cS_k,d_1^{\eps})}  \cL_{d_0}(i,t).
\end{align}
\item {\it Update the source set}: add $s_{r+1}$ to the list of inferred sources,
\begin{align}
\cS_{r+1}=\cS_r\cup s_{r+1},\nonumber
\end{align}
\end{itemize}
\item {\it Termination:} stop if $r=m$.
\end{itemize}
\end{alg}
\noindent
In the following, we show that if sources are sufficiently incoherent (i.e., sufficiently distant from each other in the network), the solution of localized NI Algorithm \ref{alg:multi-source--NI} approximates the exact solution closely.

\begin{theorem}\label{thm:optimality-multi-source}
Let $G=(V,E)$ be a regular undirected tree with countably infinite nodes. Suppose sources are $\eps$-coherent according to Definition \ref{def:eps-coherent}, and the underlying diffusion process is according to the SI model of Definition \ref{def:SI-model}. Suppose $\cS_r$ is the set of sources inferred by localized NI Algorithm \ref{alg:multi-source--NI} till iteration $r$. If $\cS_k\subset\cS$, then with probability at least $1-\eps$, there exists a source node that has not been inferred yet whose localized likelihood score is optimal in expectation:
\begin{align}
&\exists s\in \cS-\cS_r,\quad,\bbE[\cL_{d_0}(s,t)]\geq \bbE[\cL_{d_0}(i,t)]\\
&\forall i\in \VI^{t}-D(\cS_k,d_1^{\eps}).\nonumber
\end{align}
\end{theorem}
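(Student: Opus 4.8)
The plan is to use $\eps$-coherence to reduce the $(r{+}1)$-st greedy step to $m-r$ nearly independent single-source problems on disjoint subtrees, and then to invoke the single-source mean-field optimality of Theorem~\ref{thm:optimality-NI-ML} in a localized form.

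\emph{Step 1 (a separation event of probability $\ge 1-\eps$).} I would work on the event $A$ that every infected node lies within distance $d_1^\eps$ of some source. On a tree the SI infection time of a node $j$ equals $\min_{s\in\cS}\cT_{\mathcal{P}_{s\to j}}$, where $\cT_{\mathcal{P}_{s\to j}}$ is the Erlang travel time over the \emph{unique} $s$--$j$ path; hence the probability that $j$ gets infected along a path of length $\ge d_1^\eps$ is at most $F(d_1^\eps,t)<\eps/(nm)$, and a union bound over node--source pairs gives $\Pr[A]\ge 1-\eps$. Since the centers of the balls $D(s_a,d_1^\eps)$ are more than $2(d_0+d_1^\eps)>2d_1^\eps$ apart, these balls are pairwise disjoint, so on $A$ the infected set is a disjoint union of ``clusters,'' one inside each $D(s_a,d_1^\eps)$. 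Under the hypothesis $\cS_r\subset\cS$, deleting $D(\cS_r,d_1^\eps)$ then removes exactly the clusters of the already-inferred sources, so on $A$
\[
\VI^t-D(\cS_r,d_1^\eps)=\bigcup_{s\in\cS-\cS_r}\big(\VI^t\cap D(s,d_1^\eps)\big),
\]
a disjoint union; in particular every candidate $i$ lies within $d_1^\eps$ of a \emph{unique} remaining source $s'=s'(i)$, and each remaining source is itself a candidate.

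\emph{Step 2 (localization of the likelihood).} Fix a candidate $i$ with nearby remaining source $s'$. For any $j\in D(i,d_0)$ one has $d(s',j)<d_0+d_1^\eps$, hence $d(s'',j)>d_1^\eps$ for every other source $s''$ (else $d(s',s'')\le d_0+2d_1^\eps<2(d_0+d_1^\eps)$, contradicting $\eps$-coherence), so the other sources add at most $(m-1)F(d_1^\eps,t)<\eps/n$ to $\Pr[y_j(t)=1]$. Using that on a tree this marginal is $F(d(s',j),t)$ up to an $O(\eps/n)$ correction, and that $p_{i,j}(t)=F(d(i,j),t)$ with $d(i,j)<d_0$ (so it is bounded away from the endpoints), I would rewrite $\bbE[\cL_{d_0}(i,t)]$ as a sum over $j\in D(i,d_0)$ of per-node cross-entropy terms comparing the true marginal $F(d(s',j),t)$ against the predicted marginal $F(d(i,j),t)$, plus a correction that is $O(\eps)$ uniformly over candidates (the ball $D(i,d_0)$ is a fixed finite set on the regular tree and each summand is bounded). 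By vertex-transitivity this leading sum depends only on the rooted-isomorphism type of the pair $(s',i)$, so it coincides with the localized expected log-likelihood of $i$ in the single-source problem with source $s'$, and its value at $i=s'$ is the same for every remaining source.

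\emph{Step 3 (localized single-source optimality --- the crux).} It then remains to show that in a regular tree with one source $s'$ the localized expected log-likelihood is maximized at $i=s'$; I expect this to be the main obstacle, and the plan is to re-run the argument behind Theorem~\ref{thm:optimality-NI-ML} with all sums truncated to $d_0$-balls. On the overlap $D(s',d_0)\cap D(i,d_0)$ the pointwise Gibbs inequality (cross-entropy $\ge$ entropy, equality iff the two marginals coincide) together with strict monotonicity of the Erlang CDF in its shape parameter makes the comparison favourable term by term; the delicate part is the boundary, where a regular-tree count shows that the ring $D(s',d_0)\setminus D(i,d_0)$ shed by the displacement has the same cardinality as the ring $D(i,d_0)\setminus D(s',d_0)$ newly included, while each shed vertex costs only the binary entropy $H(F(d(s',j),t))\le\log 2$ (and far less when $d(s',j)$ is small, where $F$ is near $1$), whereas each newly included vertex incurs the strictly larger cost of predicting as infected a vertex the true source reaches only with small probability --- using $F(d_0+1,t)\le\tfrac12$ to bound this penalty below. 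The difficulty is checking that the Erlang monotonicity makes this trade-off strictly favourable at \emph{every} displacement, not just for near-source candidates. The resulting strictly positive gap between $i=s'$ and any $i\ne s'$ then dominates the $O(\eps)$ cross-source term of Step~2 (this is exactly where the $1/(nm)$ slack in the definition of $d_1^\eps$ is used), giving $\bbE[\cL_{d_0}(s,t)]\ge\bbE[\cL_{d_0}(i,t)]$ for some $s\in\cS-\cS_r$ and all candidates $i\in\VI^t-D(\cS_r,d_1^\eps)$, on the event $A$ of probability at least $1-\eps$.
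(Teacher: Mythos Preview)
Your Steps 1 and 2 match the paper's argument closely: the paper also works on the event that no infected node lies outside $\bigcup_s D(s,d_1^\eps)$ (a union bound giving probability $\ge 1-\eps$), and uses $\eps$-coherence to show that for $j\in D(i,d_0)$ the multi-source marginal $Pr[y_j(t)=1]$ equals $p_{s',j}(t)$ up to $O(\eps/n)$, so $\bbE[\cL_{d_0}(i,t)]$ reduces to the single-source cross-entropy sum against the nearby source $s'$. The overlap part of Step 3 (Gibbs on $D(s',d_0)\cap D(i,d_0)$) also matches.

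The gap is in your boundary comparison. You are right that the two rings have the same cardinality, but ``shed entropy $\le\log 2$ versus new cross-entropy strictly larger'' is not enough to conclude. Take $j\in D(i,d_0)\setminus D(s',d_0)$ with $d(i,j)=d_0-1$ and $d(s',j)=d_0$, and $j'\in D(s',d_0)\setminus D(i,d_0)$ with $d(s',j')=d_0-1$: both the shed entropy term at $j'$ and the new cross-entropy term at $j$ can be arbitrarily close to $\log 2$ (all relevant probabilities near $\tfrac12$), so termwise dominance fails without extra structure. What the paper uses instead is the \emph{distance-matching} bijection of Lemma~\ref{lem:distance-sym-trees}: pair $j\leftrightarrow j'$ so that $d(i,j)=d(s',j')$, hence $p_{i,j}(t)=p_{s',j'}(t)$, and then the per-pair difference telescopes exactly,
\begin{align*}
&\Big[p_{s',j'}\log p_{s',j'}+(1-p_{s',j'})\log(1-p_{s',j'})\Big]-\Big[p_{s',j}\log p_{i,j}+(1-p_{s',j})\log(1-p_{i,j})\Big]\\
&\qquad=\big(p_{s',j'}-p_{s',j}\big)\log\frac{p_{s',j'}}{1-p_{s',j'}}\ \ge\ 0,
\end{align*}
since $d(s',j')<d_0\le d(s',j)$ gives $p_{s',j'}>p_{s',j}$, and $d(s',j')<d_0$ gives $p_{s',j'}>\tfrac12$, making the log factor positive. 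This is precisely where the definition of $d_0$ is used. Once you use this bijection (rather than only its cardinality consequence), the ``every displacement'' difficulty you flag disappears, and no further Erlang monotonicity check is required.
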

\begin{proposition}\label{prop:complexity-greedy-NI}
A worst case computational complexity of localized NI Algorithm \ref{alg:multi-source--NI} is $\cO\big(|\VI^{t}|(k|E|+kn\log(n)+mn)\big)$.
\end{proposition}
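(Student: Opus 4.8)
The plan is to split the running time into a one-time preprocessing cost for building the part of the diffusion kernel that the algorithm ever touches, and the incremental cost of the $m$ greedy iterations. First I would note that Algorithm \ref{alg:multi-source--NI} only queries kernel entries $p_{i,j}(t)$ with $i\in\VI^{t}$, so by Proposition \ref{prop:complexity-diffusion-kernel} and the remark following it, all the required rows of $P(t)$ — together with the shortest-path distances $d_{i,j}$ that determine the neighborhoods $D(i,d_0)$ and the exclusion sets $D(\cS_r,d_1^{\eps})$ — can be computed once, up front, in time $\cO(k|E||\VI^{t}|+k|\VI^{t}|n\log n)$. After this step, every kernel value and every pairwise distance used below is available by $\cO(1)$ lookup.

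Next I would bound the cost of a single iteration $r+1$. The exclusion set $D(\cS_r,d_1^{\eps})$ is read off from the already-computed distances (or recomputed by a BFS/Dijkstra sweep from the at most $m$ inferred sources), at cost $\cO(mn)$ or less, which is dominated by the preprocessing term. For each surviving candidate $i\in\VI^{t}-D(\cS_r,d_1^{\eps})$, evaluating the localized log-likelihood $\cL_{d_0}(i,t)$ amounts to summing $\log p_{i,j}(t)$ over $j\in\VI^{t}\cap D(i,d_0)$ and $\log\big(1-p_{i,j}(t)\big)$ over $j\in D(i,d_0)\setminus\VI^{t}$; since $|D(i,d_0)|\leq n$ and each term is an $\cO(1)$ lookup plus a logarithm, this costs $\cO(n)$ per candidate, hence $\cO(|\VI^{t}|n)$ over all candidates, and the final $\argmax$ adds only $\cO(|\VI^{t}|)$.

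Finally I would sum over the $m$ iterations, giving $\cO(m|\VI^{t}|n)$ for the greedy phase, and add back the preprocessing cost to obtain
\[
\cO\!\big(k|E||\VI^{t}|+k|\VI^{t}|n\log n+m|\VI^{t}|n\big)=\cO\!\big(|\VI^{t}|(k|E|+kn\log n+mn)\big).
\]
The one place that needs a little care — and the step I would treat as the main obstacle, though it is a mild one — is arguing that the localized likelihood evaluations do not force a re-derivation of shortest paths or kernel entries at each iteration. This is handled by having the preprocessing step store the full infected-node rows of $P(t)$ and the associated distances, so that the per-iteration work is purely bookkeeping over at most $n$ neighbors per candidate and the bound follows by summation.
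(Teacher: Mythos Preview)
Your argument is correct and mirrors the paper's own proof: both split the cost into the one-time kernel/row computation for the $|\VI^{t}|$ infected nodes, $\cO(|\VI^{t}|(k|E|+kn\log n))$, and the $m$ greedy iterations at $\cO(|\VI^{t}|n)$ each, then add to get $\cO(|\VI^{t}|(k|E|+kn\log n+mn))$. You give more detail on why the per-iteration work is purely lookups, but the decomposition and conclusion are the same.
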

\subsection{NI for Heterogeneous Network Diffusion}\label{subsec:hetro-NI}
In previous sections, we have assumed that the infection spread in the network is homogeneous; i.e., virus traveling time variables $\cT_{(i,j)}$ are i.i.d. for all edges in the network. This can be an appropriate model for the binary (unweighted) graphs. However, if edges have weights, the infection spread in the network may be heterogeneous; i.e., the infection spread is faster over strong connections compared to the one of weak edges.
\noindent
Suppose $G=(V,E,W)$ represents a weighted graph, where $w(i,j)>0$ if $(i,j)\in E$, and $w(i,j)=0$ otherwise. One way to model a heterogeneous diffusion in the network is to assume that edge holding time variables $\cT_{(i,j)}$ are distributed independently according to an exponential distribution with parameter $\lambda_{i,j}=w(i,j)$. According to this model, the average holding time of edge $(i,j)$ is $1/w_{i,j}$, indicating the fast spread of infection over strong connections in the network.

\begin{figure}[t]
\begin{center}
\includegraphics[width=0.4\textwidth]{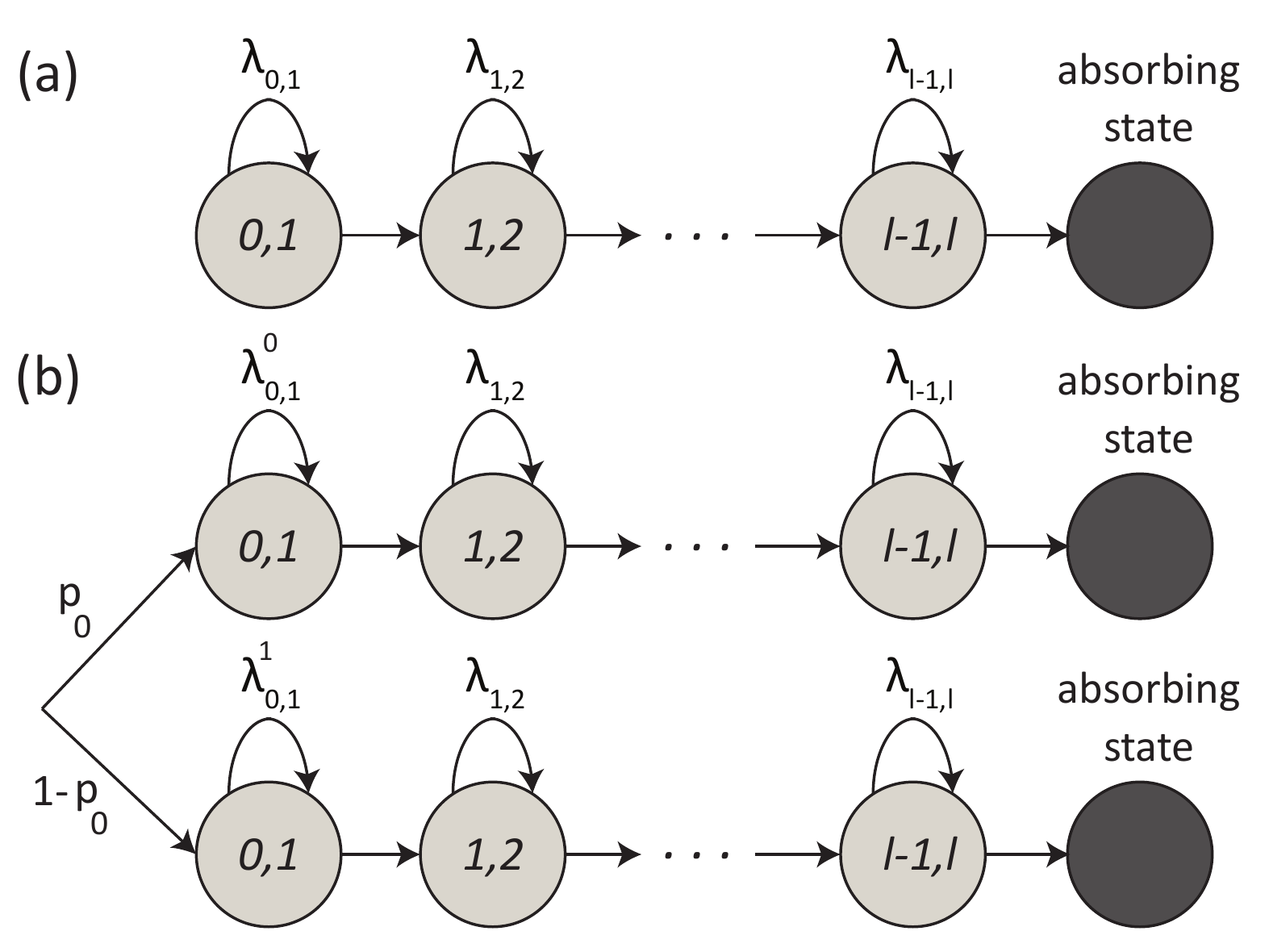}
\end{center}
\caption{(a) A Markov chain of a hypo-exponential distribution. (b) A Markov chain of a mixed hypo-exponential distribution.}
\label{fig:markov}
\end{figure}
\noindent
Recall that $\cT_{\cpij^r}$ represents the virus traveling time variable from node $i$ to node $j$ over the path $\cpij^r$. To simplify notations and highlight the main idea, consider the path $\mathcal{P}_{0\to l}^r=\{0\to 1 \to 2 \ldots \to l\}$. The virus traveling time from node $0$ to node $l$ over this path ($\cT_{\mathcal{P}_{0\to l}^r}$) is a {\it hypoexponential} variable whose distribution is a special case of the {\it phase-type distribution}. For this path, we consider a Markov chain with $l+1$ states, where the first $l$ states are transient, and the state $l+1$ is an absorbing state. Each transient state of this Markov chain corresponds to an edge $(i,j)$ over this path whose holding time is characterized by an exponential distribution with rate $\lambda_{i,j}=w(i,j)$ (Figure \ref{fig:markov}-a). In this setup, the virus traveling time from node $0$ to node $l$ over the path $\mathcal{P}_{0\to l}^r$ is equal to the time from the start of the process until reaching to the absorbing state of the corresponding Markov chain. The distribution of this absorbing time can be characterized as a special case of the phase-type distribution. A subgenerator matrix of the Markov chain of Figure \ref{fig:markov}-a is defined as follows:

\begin{align}
\left[\begin{matrix}-\lambda_{0,1}&\lambda_{0,1}&0&\dots&0&0\\
                    0&-\lambda_{1,2}&\lambda_{1,2}&\ddots&0&0\\
                    \vdots&\ddots&\ddots&\ddots&\ddots&\vdots\\
                    0&0&\ddots&-\lambda_{l-3,l-2}&\lambda_{l-3,l-2}&0\\
                    0&0&\dots&0&-\lambda_{l-2,l-1}&\lambda_{l-2,l-1}\\
                    0&0&\dots&0&0&-\lambda_{l-1,l}
\end{matrix}\right].\
\end{align}
\noindent
For simplicity, denote the above matrix by $\Theta\equiv\Theta(\lambda_{0,1},\dots,\lambda_{l-1,l})$. Define $\boldsymbol{\alpha}=(1,0,\dots,0)$ as the probability of starting in each of the $l$ states. Then, the Markov chain absorbtion time is distributed according to $PH(\boldsymbol{\alpha},\Theta)$, where $PH(.,.)$ represents a phase-type distribution. In this special case, this distribution is also called a hypoexponential distribution. A similar subgenerator matrix $\Theta$ can be defined for a general path $\cpij^r$ connecting nodes $i$ to $j$. Thus, we have

\begin{figure*}[t]
  \centering
      \includegraphics[width=1.0\textwidth]{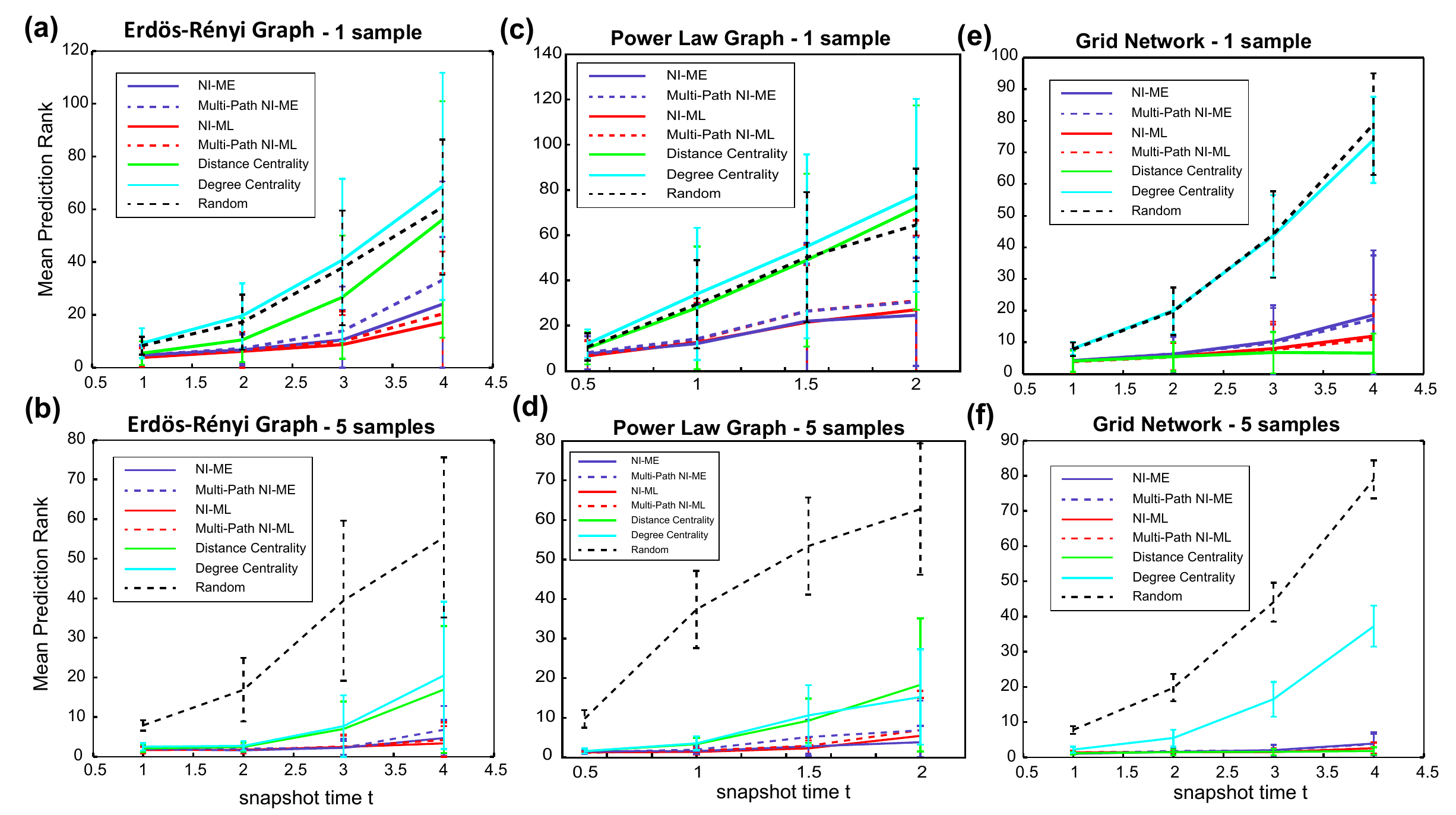}
  \caption{Performance of source inference methods over (a,b) Erd\"os-R\'enyi, (c,d) power law and (e,f) grid networks. For evaluation, we compute the rank of true sources averaged over different runs of simulations. If this score is close to one, it means that the true source is among top predictions of the method. Error bars represent $\pm$ standard deviations of empirical results for each case. Experiments have been repeated 100 times so that error margins are small. For more details, see Section \ref{sec:sim} and Appendix \ref{SIsec:results-sim}.}
  \label{fig:main-sim}
\end{figure*}

\begin{align}\label{eq:hypo-kernel}
\Fijr(t)=Pr[\cT_{\cpij^r}\leq t]=1-\boldsymbol{\alpha}e^{t\Theta}\boldsymbol{1},
\end{align}
\noindent
where $\boldsymbol{1}$ is a column vector of ones of the size $|\cpij^r|$, and $e^{X}$ is the matrix exponential of $X$. For an unweighted graph where all edges have the same rate $\lambda$, \eqref{eq:hypo-kernel} is simplified to \eqref{eq:erlang-kernel}. For the weighted graph $G=(V,E,W)$, we compute $k$ shortest paths among pairs of nodes over the graph $G'=(V,E,W')$, where $w'(i,j)=1/w(i,j)$ if $(i,j)\in E$, otherwise $w'(i,j)=\infty$. Then, the path network diffusion kernel for a weighted graph $G=(V,E,W)$ can be defined according to Definition \eqref{def:path-diffusion-kernel}. Using this kernel, NI algorithms introduced in Sections \ref{subsec:NI-global} and \ref{subsec:NI-multiple-source} can then be used to infer the source node under the heterogeneous diffusion in the network.
\noindent
Note that this framework can be extended to a more complex diffusion setup as well. We provide an example of such diffusion setup in the following:

\begin{example}
\textup{Consider the the path $\mathcal{P}_{0\to l}^r=\{0\to 1\to 2 \to \ldots \to l\}$. Suppose the edge $(0,1)$ spreads the infection with rates $\lambda_{0,1}^0$ and $\lambda_{0,1}^1$ with probabilities $p_0$ and $1-p_0$, respectively. Suppose other edges $(i,j)$ of this path spread the infection with rate $\lambda_{i,j}$. Figure \ref{fig:markov}-b illustrates the corresponding Markov chain for this path. The subgenerator matrices of this Markov chain can be characterized as follows:}

\begin{align}
\Theta_{i}=\left[\begin{matrix}-\lambda_{0,1}^i&\lambda_{0,1}^i&0&\dots&0&0\\
                    0&-\lambda_{1,2}&\lambda_{1,2}&\ddots&0&0\\
                    \vdots&\ddots&\ddots&\ddots&\ddots&\vdots\\
                    0&0&\ddots&-\lambda_{l-3,l-2}&\lambda_{l-3,l-2}&0\\
                    0&0&\dots&0&-\lambda_{l-2,l-1}&\lambda_{l-2,l-1}\\
                    0&0&\dots&0&0&-\lambda_{l-1,l}
\end{matrix}\right],\
\end{align}
\textup{for $i=0,1$. Then, for this path, we have}

\begin{align}\label{eq:hypo-kernel}
F_{\mathcal{P}_{0\to l}^r}(t)=Pr[\cT_{\mathcal{P}_{0\to l}^r}\leq t]=1-\boldsymbol{\alpha}\big(p_0e^{t\Theta_0}+(1-p_0)e^{t\Theta_1}\big)\boldsymbol{1}.
\end{align}
\noindent
\textup{To compute the path-based network diffusion kernel, we compute shortest paths among pairs of nodes over the graph $G'=(V,E,W')$, where $w'(0,1)=p_0/\lambda_{0,1}^0+(1-p_0)/\lambda_{0,1}^1$, $w'(i,j)=1/w(i,j)$ if $(i,j)\in E$ and $(i,j)\neq (0,1)$, and $w'(i,j)=\infty$ otherwise. This example illustrates that NI framework can be used even under a complex heterogeneous diffusion setup.}
\end{example}

\subsection{Non-parametric NI - Unknown Snapshot Time}\label{subsec:parameters}
In some real-world applications, only the network structure $G$ and infection patterns $\{\by(t_1),\ldots,\by(t_z)\}$ are known and therefore to use NI algorithms, we need to learn the parameters such as observation times $\{t_1,\ldots,t_z\}$, the number of sources $m$ in the network, and the number of disjoint shortest paths considered in the diffusion kernel $k$. In the following, we introduce efficient techniques to learn these parameters:

{\it Observation time parameters:} In the maximum likelihood NI Algorithm \ref{alg:NI-ML-single}, according to Remark \ref{remark:NI-ML-t-parameter}, the true parameter $t$ is the one that maximizes the expected source likelihood score according to Theorem \ref{thm:optimality-NI-ML}. Thus, in the case of unknown parameter $t$, we solve the following optimization:

\begin{align}\label{opt:NI-ML-unknown-t}
(s,t)=\arg\max_{i,t'} \cL(i,t'),
\end{align}
where $\cL(i,t')$ is the log-likelihood function of node $i$ using parameter $t'$. One way to solve Optimization \eqref{opt:NI-ML-unknown-t} approximately is to quantize the range of parameter $t$ (i.e., $t\in (0,t_{max})$) to $b$ bins and evaluate the objective function in each case. Because we assume that the $\lambda$ parameter of the edge holding time distribution is equal to one, one appropriate choice for $t_{max}$ is the diameter of the infected subgraph, defined as the longest shortest path among pairs of infected nodes. The number of quantization levels $b$ determines the resolution of the inferred parameter $t$ and therefore the tightness of the approximation. If $t_{max}$ is large and the true $t$ parameter is small, to have a tight approximation, the number of quantization levels $b$ should be large which may be computationally costly. In this case, one approach to estimate parameter $t$ is to use the first moment approximation of the Erlang network diffusion kernel over source neighbors. Suppose $\mu$ is the fraction of the infected neighbors of source $s$. Since infection probabilities of source neighbors approximately come from an exponential distribution, for a given parameter $t$, $\mu\approx 1-e^{-t}$. Therefore,
    \begin{align}\label{eq:estimate-t-small}
    t\approx -\ ln(1-\mu).
    \end{align}
In the minimum error NI Algorithm \ref{alg:NI-ME-single}, according to Remark \ref{remark:NI-ME-t-parameter}, the prediction error of all infected nodes should be computed using the same parameter $t$. In the setup of Theorem \ref{thm:optimality-NI-ME}, any value of parameter $t$ leads to an optimal solution in expectation. In general, we suggest the following approach to choose this parameter: First, for each node, we minimize the prediction error for different values of the parameter $t$ as follows,

\begin{align}\label{opt:NI-ME-unknown-t}
t_i^*=\arg\max_{t'} \cH_{\alpha}(i,t').
\end{align}

This Optimization can be solved approximately similarly to the case of maximum likelihood NI Optimization \eqref{opt:NI-ML-unknown-t}. For the small $t$ values, we use \eqref{eq:estimate-t-small} to obtain $t_i^*$. Then, to obtain a fixed $t$ parameter for all nodes, we take the median of $t_i^*$ parameters of nodes with the minimum prediction error (in our experiments, we consider top 10 predictions with the smallest error to estimate the parameter $t$). In the cases of multi-source and multi-snapshot NI, one can use similar approaches to estimate time stamp parameters.

{\it The number of sources:} In NI algorithms presented in Section \ref{subsec:NI-multiple-source}, we assume that the number of sources in the network (i.e., the parameter $m$) is known. In the case of unknown parameter $m$, if sources are sufficiently incoherent according to Definition \ref{def:eps-coherent}, one can estimate $m$ as follows: because sources are incoherent, their caused infected nodes do not overlap with each other with high probability. Thus, the number of connected components of the infected sub-graph (or the number of infected clusters in the network) can provide a reliable estimate of the number of sources in this case.

{\it Regularization parameter of NI-ME:} The minimum error NI Algorithm \ref{alg:NI-ME-single} has a regularization parameter $\alpha$ which balances between false positive and false negative error types. In the setup of Theorem \ref{thm:optimality-NI-ME}, any value of $0<\alpha<1$ leads to an optimal expected weighted Hamming error solution of \eqref{eq:NI-ME-inequality}. However, in general, we choose this parameter according to Remarks \ref{remark:alpha} and \ref{remark:alpha2} to have a balance between the number of false negative and false positive errors.

{\it Number of disjoint shortest paths considered in kernel computation:} Our proposed path-based network diffusion kernel considers up to $k$ independent shortest paths among nodes where a node $j$ gets infected at time $t$ if the infection reaches to it over at least one of the $k$ disjoint shortest paths connecting that node to the source node. A path-based network diffusion kernel with $k=1$ has the least computational complexity among other path-based network diffusion kernels. Considering more paths among nodes in the network (i.e., $k>1$) provides a better characterization of network diffusion processes with the cost of increased kernel computational complexity (Proposition
\ref{prop:complexity-diffusion-kernel}). For example, over an asymmetric grid network of Appendix \ref{SIsec:results-sim}, considering increased numbers of paths among nodes to form the path network diffusion kernel improves the performance of NI methods significantly as higher order paths partially capture the asymmetric diffusion spread in the network.

\begin{figure}[t]
  \centering
      \includegraphics[width=0.4\textwidth]{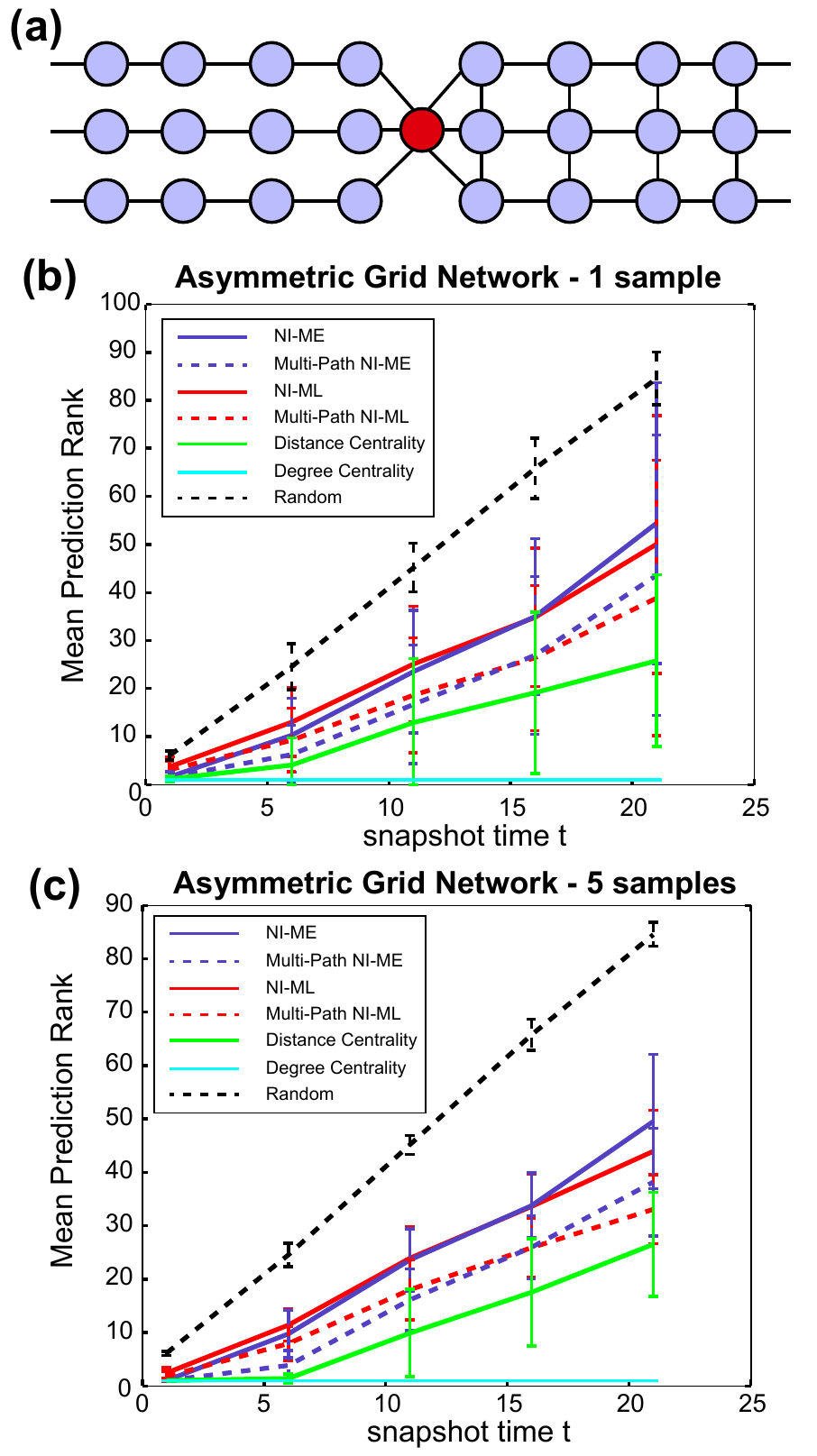}
  \caption{Performance of source inference methods over asymmetric grid networks with $250$ nodes with (b) one and (c) five independent samples. Node with red color is the source node.}
  \label{fig:asym-grid}
\end{figure}
\section{Performance Evaluation on Synthetic and Real Data}
\subsection{NI Over Synthetic Networks}\label{sec:sim}
In this section, we assess the performance of NI and other source inference
algorithms over different synthetic network structures. To
generate simulated diffusion patterns, we use the SI kernel to allow
a fair performance comparison with existing methods. The inputs to the algorithms are the network structure $G$, and the
observed infection pattern at some {\it unknown} time $t$ (i.e.,
$\by(t)$); we estimate a within-diffusion-model parameter $t$ using the observed infection
pattern and network structure according to techniques described in
Section \ref{subsec:parameters}. For evaluation, we sort infected nodes as
source candidates according to the score obtained by different
methods. We then compute the rank of true sources averaged over different
runs of simulations. If this score is close to one, it means that
the true source is among top predictions of the method (For more details, see Appendix \ref{SIsec:results-sim}).
\begin{figure*}[t]
  \centering
      \includegraphics[width=0.8\textwidth]{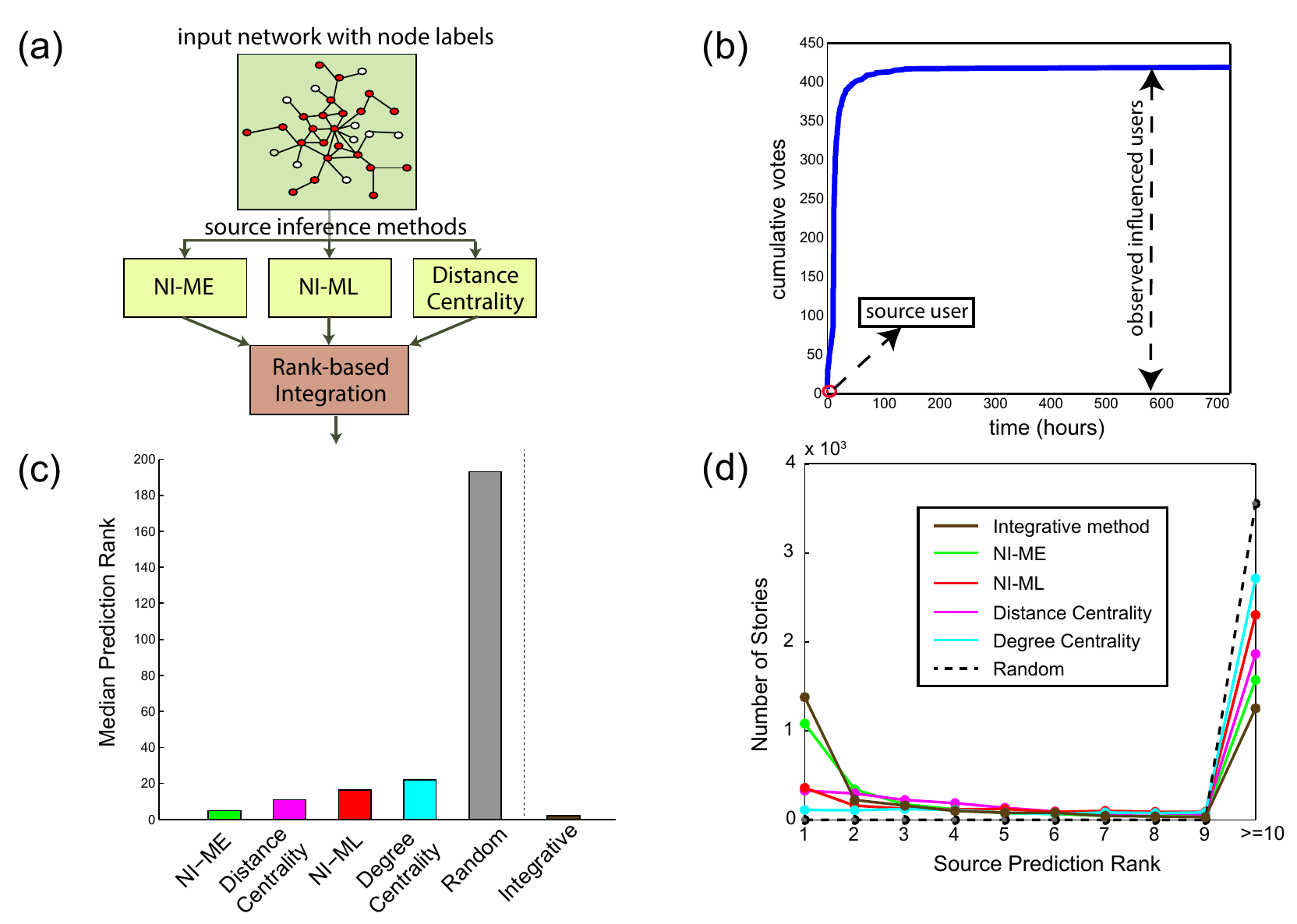}
  \caption{(a) An integrative source inference framework which combines prediction ranks of NI-ME, NI-ML and distance centrality methods. (b) The cumulative number of votes for a particular story over time. (c) Median prediction ranks of true sources for over 3,500 Digg stories inferred by individual and integrative source inference methods. (d) Histogram of prediction ranks of sources inferred by different source inference methods.}
  \label{fig:main-digg}
\end{figure*}
Figure \ref{fig:main-sim}-(a-d) compares the performance of different source inference
methods over both Erd\"os-R\'enyi and power law networks, and in different
ranges of the parameter $t$. Panels (a) and (c) illustrate the performance when only one sample from the infection pattern at time $t$ is available, while Panels (b) and (d) demonstrate the performance when five independent samples from the infection pattern at time $t$ are given, illustrating the mean-field optimality of NI methods according to Theorems \ref{thm:optimality-NI-ML} and \ref{thm:optimality-NI-ME}. In all cases, the NI-ML and NI-ME algorithms significantly
outperform other methods. Because the underlying networks are sparse,
the performance of single-path and multi-path NI methods are close
to each other in both network models. Notably, unlike NI methods,
the performance of other source inference methods such as distance
centrality and degree centrality does not tend to converge to the
optimal value even for higher sample sizes (Appendix \ref{SIsec:results-sim}).

Similar results hold for NI algorithms over grid structures (Figure \ref{fig:main-sim}-e,f). In this case, the performance of the distance centrality method for higher sample sizes converges to the optimal value as well. Since grid structures are symmetric, even though paths among nodes overlap significantly, considering shortest paths approximates the true diffusion dynamics closely (Appendix \ref{SIsec:results-sim}). In Figure \ref{fig:asym-grid}, we also illustrate an adversarial example of an asymmetric grid structure where the performance of source inference methods do not converge to the optimal value, owing to the asymmetric diffusion spread in the network. In this case, considering increased numbers of paths among nodes to form the path network diffusion kernel improves the performance of NI methods significantly, as higher order paths partially capture the asymmetric diffusion spread in the network (Appendix \ref{SIsec:results-sim}).

\subsection{Inference of News Sources in Digg Social News Networks}\label{sec:digg}
We evaluate the performance of NI and other source inference techniques in identifying news sources over the social news aggregator Digg. We also propose an integrative source inference framework which combines network infusion methods (NI-ME and NI-ML) with the distance centrality method. We exclude the degree centrality method from our integration owing to its low performance over synthetic and real networks. In our integrative framework, we compute the average rank of each node to be the source node according to different source inference methods (Figure \ref{fig:main-digg}-a and Appendix \ref{SIsec:digg}). The proposed integrative method removes method-specific biases of individual source inference techniques and leads to robust performance, particularly when the underlying dynamics are unknown.

Reference \cite{digg} has
collected voting activities and friendship connections of Digg's
users over a period of a month in 2009 for $3,553$ promoted news stories.
We use this data to form a friendship network of active Digg users
with $24,219$ nodes and more than $350 K$ connections. We consider
users as active if they have voted for at least 10 stories in this
time period. Figure 1 demonstrates a small part of the Digg
friendship network. Figure Figure \ref{fig:main-digg}-b demonstrates the cumulative number of votes for a particular
story at different times. If friends of a user $A$ vote for a
specific story, it is more likely that user $A$ will also vote for that
story and that is how information propagates over the Digg friendship
network. NI aims to invert this information propagation process
to infer the news source by observing the voting pattern in a single
snapshot in the steady state when no further news propagation occurs. Here, we only consider the shortest
path (i.e. $k=1$) among nodes to compute the path-based network
diffusion kernel used in NI Algorithms \ref{alg:NI-ML-single} and
\ref{alg:NI-ME-single}. This application provides an interesting real data framework to assess
the performance and robustness of different source inference methods
because the underlying diffusion processes are based on real dynamics over the Digg friendship
network.

Figure Figure \ref{fig:main-digg}-c demonstrates the median prediction ranks of true sources inferred by individual and integrative source inference methods. NI-ME outperforms other individual source inference methods methods with a medium prediction rank of 5, while the integrative source inference leads to the best performance with a median prediction rank of 3. In fact, NI-ME  and integrative source inference algorithms infer news sources optimally (in their top prediction) for approximately $30\%$ and $38\%$ of stories, respectively (Figure Figure \ref{fig:main-digg}-d). This illustrates the robustness of these methods in cases where the underlying dynamics are unknown.
\section{Discussion}
The key idea of our proposed source inference method is designing diffusion processes that closely approximate the observed diffusion pattern, while leading to tractable source inference methods for large complex networks. Our proposed path-based network diffusion kernel well-approximates a standard SI diffusion model particularly over sparse networks. The key intuition is that to infer the source node in the network, full characterization of diffusion dynamics, in many cases, may not be necessary. One advantage of our proposed path-based network diffusion kernel is its tunability, that it can consider different number of shortest paths in kernel computation. This resembles, vaguely, a Taylor-expansion of network topology to form a diffusion kernel with different orders of expansion. One can extend this key idea to design other network diffusion kernels to approximate other general diffusion models such as SIR (susceptible-infected-recovered)\cite{SIR}, or to design network-specific diffusion kernels considering different topological properties such as their symmetry, degree distribution, etc.

The NI framework infers source nodes using the network topology and snapshots from the infection spread. On the other hand,
\cite{rodriguez2014uncovering,sefer2015convex} have considered the problem of network inference given propagation pathways over the network. If the network topology is partially unknown or has some errors with either false positive or false negative edges, one can devise a {\it joint network inference-network infusion} framework where in one direction, the pattern of infection spread is used to learn the network topology by denoising false positive and false negative edges, while in the other direction, the topology of the network is used to infer pathways of infection spread over the network.

\begin{appendices}

\section{Tightness of the Path-based Network Diffusion Kernel for Sparse Erd\"os-R\'enyi Graphs}\label{SIsec:erdos}
The path-based network diffusion kernel of Definition \ref{def:Erlang-kernel-shortest-path} does not consider effect of overlapping paths among nodes. This can cause the SI approximation to be loose particularly if there are several overlapping shortest paths among nodes in the network since shortest paths are dominant factors in kernel computation. In the following proposition we show that in a sufficiently sparse Erd\"os-R\'enyi graph, the probability of having overlapping shortest paths among nodes is small.
\begin{proposition}\label{prop:erdos-overlap}
Let $G=(V,E)$ be an undirected Erd\"os-R\'enyi graph with $n$ nodes where $Pr[(i,j)\in E]=p$. Consider two nodes $i$ and $j$ where $d(i,j)\leq l_0$. If $p<\frac{c}{n}$ where $c=1/n^{2l_0}$, the probability of having overlapping shortest paths between nodes $i$ and $j$ goes to zero asymptotically.
\end{proposition}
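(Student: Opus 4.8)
The plan is a first-moment (union-bound over small subgraphs) argument, which is the standard tool for controlling local structure in sparse Erd\"os--R\'enyi graphs. The first step is a purely combinatorial reduction. If $d(i,j)\le l_0$ and there exist two overlapping shortest $i$--$j$ paths, then in particular $G$ contains two \emph{distinct} $i$--$j$ paths $P_1\neq P_2$, each of length at most $l_0$. Let $H=P_1\cup P_2$ be the subgraph formed by their union. Then $H$ is connected (both paths run between $i$ and $j$); $H$ has at most $2l_0$ vertices (each $P_r$ has at most $l_0+1$ vertices and the two paths share $i$ and $j$); and $H$ is \emph{not} a tree, since otherwise it would contain a unique $i$--$j$ path, contradicting $P_1\neq P_2$. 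Hence $H$ contains a cycle, so $v(H)\ge 3$ and $e(H)\ge v(H)$.

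Next I would union-bound over all such $H$. Fix $v$ with $3\le v\le 2l_0$. The number of ways to extend $\{i,j\}$ to a vertex set of size $v$ is at most $n^{\,v-2}$, and on a fixed such vertex set the number of candidate subgraphs $H$ with $e(H)\ge v$ is at most $2^{\binom{v}{2}}\le 2^{\binom{2l_0}{2}}=:C(l_0)$, a constant depending only on $l_0$. A fixed $H$ on a fixed labelled vertex set occurs in $G$ with probability exactly $p^{e(H)}\le p^{\,v}$ (using $p<1$). Therefore
\begin{align*}
\Pr\big[\,d(i,j)\le l_0 \text{ and } \exists\, \text{overlapping shortest } i\text{--}j\text{ paths}\,\big]
\;\le\; \sum_{v=3}^{2l_0} C(l_0)\, n^{\,v-2}\, p^{\,v}.
\end{align*}

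Finally I would substitute $p<c/n = n^{-(2l_0+1)}$. Each summand is at most $n^{\,v-2}\,n^{-(2l_0+1)v}=n^{-2-2l_0 v}\le n^{-2-6l_0}$, using $v\ge 3$; summing the at most $2l_0$ terms and absorbing $C(l_0)$ yields a bound of order $n^{-2-6l_0}$, which tends to $0$ as $n\to\infty$. If one prefers the conditional statement given $\{d(i,j)\le l_0\}$, it suffices to additionally note $\Pr[d(i,j)\le l_0]\ge p$, dominated by the direct edge $(i,j)$, so the conditional probability is still $O(n^{-4l_0-1})\to 0$; in fact the bad event forces $d(i,j)\ge 2$, since two distinct length-$1$ paths cannot coexist in a simple graph.

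The only genuinely delicate point is the combinatorial reduction in the first paragraph --- verifying that two distinct $i$--$j$ paths force a cycle in their union and pinning down the bounds $v(H)\le 2l_0$ and $e(H)\ge v(H)$ --- together with the bookkeeping that there are only constantly many subgraph ``shapes'' to union over, which holds precisely because all relevant subgraphs live on at most $2l_0$ vertices. Everything after that is a routine calculation; the slack in the exponent of $n$ is large, so no sharp estimates are needed.
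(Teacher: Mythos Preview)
Your argument is correct and follows essentially the same strategy as the paper: reduce ``overlapping shortest paths of length at most $l_0$'' to the existence of a short cycle, then kill that by a first-moment/union bound in the sparse Erd\H{o}s--R\'enyi regime. The paper's bookkeeping is organized slightly differently: instead of counting subgraphs $H\supseteq\{i,j\}$ with $e(H)\ge v(H)$, it bounds, for each vertex $v$ on the shortest path, the probability that $v$ lies on some cycle of length at most $2l_0$ (at most $n^{l-1}p^{l}\le c^{l}/n$ for a cycle of length $l$), and then union-bounds over the $\le l_0$ vertices on the path to get a total bound of order $l_0\,c^{2l_0}/n\to 0$. Your subgraph-counting version is a touch more explicit about the combinatorial reduction (in particular the inequality $e(H)\ge v(H)$) and even bounds the larger event ``there exist two distinct $i$--$j$ paths of length $\le l_0$'', but the two computations are interchangeable and neither uses anything beyond the standard first-moment method.
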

\begin{proof}
First, we compute the probability that a node $v\in V$ belongs to a cycle of length at most $l$. Such a cycle is determined by the $l-1$ other vertices. By choosing them in order, there are less than $n^{l-1}$ choices for those other vertices, while the cycle appears with probability $p^l$ in the graph. Thus, the probability that $v$ is involved in a cycle of length $l$ is at most $n^{l-1}p^l \leq c^l/n$. To have an overlapping shortest path between nodes $i$ and $j$, at least one of the nodes over that path should belong to a cycle of length at most $2l_0$. This happens with probability less than $l_0 c^{2l_0}/n$, which goes to zero asymptotically if $c<1/n^{2l_0}$.
\end{proof}
Figure \ref{fig:erdos-path-overlap} illustrates Proposition \ref{prop:erdos-overlap} for Erd\"os-R\'enyi graphs with different number of nodes and different parameters $p$. As illustrated in this figure, shortest paths are less likely to overlap in sparse networks. Moreover, in very dense networks, which practically might be less interesting, the shortest path overlap probability decreases as well because most node pairs are connected by one-hop or two-hop paths.

\section{NI with Multiple Snapshots}\label{SIsec:multi-snap-shot}
In this section, we consider the NI problem when multiple snapshots from infection patterns are available. To simplify notation and highlight the main ideas, we consider the single source case with two samples $\by(t_1)$ and $\by(t_2)$ at times $t_1$ and $t_2$, respectively. All arguments can be extended to a more general setup as well.

\noindent
Recall that $\VI^{t}$ denotes the set of infected nodes at time $t$. Let $\EI^{t}=\{(i,j)|(i,j)\in E, \{i,j\}\subset \VI^t\}$ represent edges among infected nodes in the network. The infection subgraph $\GI^{t}=(\VI^t,\EI^t)$ is connected if there is no infection recovery and the underlying diffusion is according to a dynamic process. We define an infection contraction operator $g(.)$ as follows:

\begin{align}\label{eq:contraction-operator}
 g(v) = \begin{cases}
        v  & v\in V\setminus \VI^t\\
        x & o.w.
        \end{cases}
\end{align}
\noindent
where $x\notin V$. In other words, $g(.)$ maps all infected nodes to a new node $x$, while it maps all other nodes to themselves.

\begin{definition}[Infusion Contraction Graph]\label{def:contraction-graph}
Suppose $G=(V,E,W)$ is a weighted graph whose infected subgraph at time $t$ is represented as $\GI^t=(\VI^t,\EI^t,\WI^t)$. An infusion contraction graph $\Gc^t=(\Vc^t,\Ec^t,\Wc^t)$ is defined as follows:
\begin{itemize}
\item $(i,j)\in \Ec^t$ for $i,j\neq x$ iff $(i,j)\in E$. In this case, $w_c^t(i,j)=w(i,j)$.
\item $(i,x)\in \Ec^t$ for $i\neq x$ iff there exists $j\in \VI^t$ such that $(i,j)\in E$. In this case, $w_c^t(i,x)=\sum_{\substack{
j\in\VI^t\\
(i,j)\in E}} w(i,j)$.
\end{itemize}
\end{definition}
\noindent
Intuitively, the infusion contraction graph considers the infected subgraph as one node and adjusts weights of un-infected nodes connected to the infected ones accordingly.
\noindent
Now we consider the source inference problem when two snapshots at times $t_1$ and $t_2$ are given. Recall that $\VI^{t_1}$ and $\VI^{t_2}$ denote the set of infected nodes at times $t_1$ and $t_2$, respectively. Without loss of generality, we assume $0<t_1<t_2$. Using the probability chain-rule, we can re-write the likelihood scores of Optimization \eqref{opt:ml-setup} as,

\begin{align}\label{eq:multi-snap-shot-prob}
& Pr\big(\by(t_1),\by(t_2)|\cS=\{s\}\big)\\
& = \underbrace{Pr\big(\by(t_1)|\cS=\{s\}\big)}_\text{term I} \underbrace{Pr\big(\by(t_2)|\by(t_1),\cS=\{s\}\big)}_\text{term II}.\nonumber
\end{align}
\noindent
Term (I) is the likelihood score of the single source NI Optimization \eqref{opt:NI-ML}. We consider different possibilities for Term II as follows:

\begin{figure}[t]
  \centering
      \includegraphics[height=6cm]{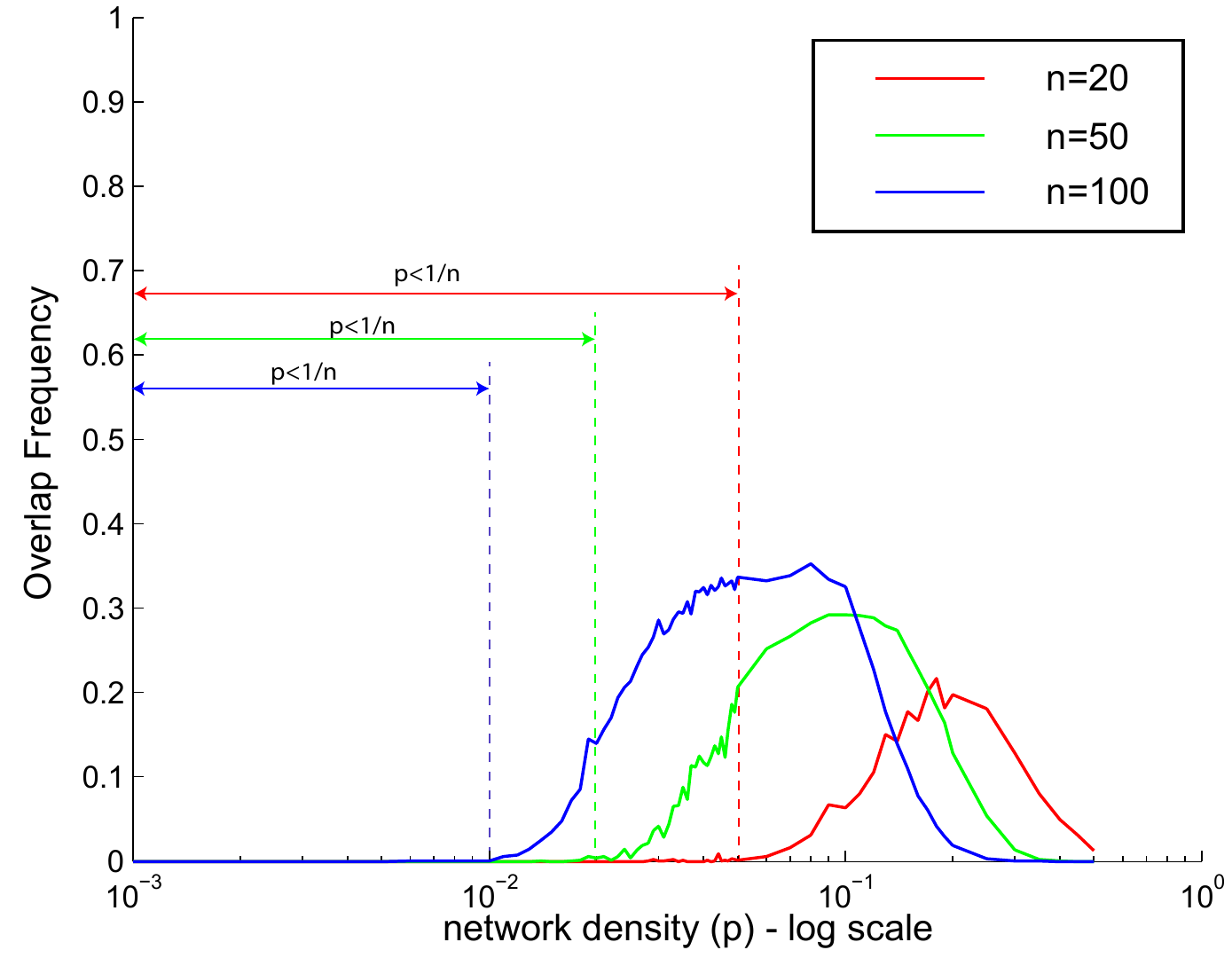}
  \caption{The frequency of having overlapping shortest paths between two randomly selected nodes over an Erd\"os-R\'enyi graph with parameter $p$. In sparse graph ($p\preceq \frac{1}{n}$), the overlap frequency is small. Experiments are repeated $20,000$ times for each case.}
  \label{fig:erdos-path-overlap}
\end{figure}

\begin{itemize}
\item If $y_j(t_1)=1$ and $y_j(t_2)=1$, $Pr(y_j(t_2)|\by(t_1),\cS=\{s\})=1$, because if a node gets infected, it remains infected (there is no recovery). Thus, if $y_j(t_1)=1$ and $y_j(t_2)=0$, $Pr(y_j(t_2)|\by(t_1),\cS=\{s\})=0$.

\item Now we consider the case $y_j(t_1)=0$. Let $\Gc^{t_1}$ be the infusion contraction graph of Definition \ref{def:contraction-graph}. Suppose that all infected nodes at time $t_1$ are mapped to the node $x$. The second term can be approximated as follows:

\begin{align}\label{eq:conditional-diffusion}
Pr(y_j(t_2)=1|\by(t_1),\cS=\{s\})\approx p_{x,j}(t_2-t_1),
\end{align}

where $p_{x,j}(.)$ is the path-based network diffusion kernel over the graph $\Gc^t$. In other words, all infected nodes at time $t_1$ can be viewed as a single source $x$ in the infusion contraction graph. Note that this approximation is tight when the underlying edge holding time distribution is an exponential distribution which has the memory-less property. Moreover, note that to compute the diffusion kernel in this case, we use the infusion contraction graph because infected nodes are not incoherent, and therefore, the multi-source diffusion kernel of Definition \ref{def:kernel-multi-source} cannot be used.
\end{itemize}
\noindent
Under approximation \eqref{eq:conditional-diffusion}, the second term of \eqref{eq:multi-snap-shot-prob} leads to a similar expression for all source candidates, and therefore, the optimization is simplified to a single snapshot one. In practice, one may compute average source likelihood scores using all snapshots to decrease the variance of the source likelihood scores.
\section{NI Over Synthetic Networks}\label{SIsec:results-sim}
In this section, we compare the performance of proposed NI algorithms with other source inference methods over four different synthetic network structures. In our simulations, we assume that there exists a single source in the network, and the underlying diffusion is according to the SI model of Definition \ref{def:SI-model}. In the SI model, edge holding time variables $\cT_{(i,j)}$ are i.i.d. having an exponential distribution with parameter $\lambda=1$. Note that to generate simulated diffusion patterns, we do not use our path-based network diffusion kernel to have a fair performance comparison of our methods with the one of other source inference techniques. We use the following methods in our performance assessment:

\begin{itemize}
\item {\bf Distance Centrality:} This method infers the source node with the minimum shortest path distance from all infected nodes. Suppose $G$ is the underlying network. The distance centrality of node $i$ corresponding to infected nodes at time $t$ is defined as follows:

    \begin{align}
     D_t(i,G)\triangleq \sum_{j\in\VI^t} d(i,j),
    \end{align}

where $d(i,j)$ represents the length of the shortest path between nodes $i$ and $j$. A source node is inferred using the following optimization:
\begin{align}
s=\arg\min_{i\in\VI^t} D_t(i,G).
 \end{align}
If there is no path between two nodes $i$ and $j$, $d(i,j)=\infty$. This makes the distance centrality measure sensitive to noise specially in real world applications. To avoid this issue, for disconnected nodes $i$ and $j$, we assign $d(i,j)=M$, where $M$ is a large number compared to shortest path distances in the network. In our simulations, we set $M$ as 5 times larger than the network diagonal (i.e., the longest shortest path in the network).
\item {\bf Degree Centrality:} This methods infers the source node with highest direct connections to other infected nodes. Degree centrality of node $i$ corresponding to infected nodes $\VI^t$ is defined as follows:
    \begin{align}
     C_t(i,G)\triangleq \sum_{j\in\VI^t} G(i,j).
    \end{align}
 Note that unlike the distance centrality method which considers both direct and indirect interactions among infected nodes, degree centrality only considers direct interactions. To infer the source node using the degree centrality approach, one needs to solve the following optimization:
\begin{align}
s=\arg\min_{i\in\VI^t} C_t(i,G).
 \end{align}
\item {\bf Network Infusion:} We use NI methods based on maximum likelihood (denoted as NI-ML) described in Algorithm \ref{alg:NI-ML-single}, and minimum error (denoted as NI-ME) described in Algorithm \ref{alg:NI-ME-single}. To have a fair comparison with other methods, we assume that the observation time parameter $t$ is unknown and is estimated using the techniques presented in Section \ref{subsec:parameters}.
\end{itemize}

\noindent
In our simulations, we use four types of input networks:
\begin{itemize}
\item {\bf Erd\"os-R\'enyi graphs:} In this case, $G$ is a symmetric random graph where, $Pr[G(i,j)=1]=p$. Networks have $250$ nodes. In our simulations, we use $p=0.01$.

\item {\bf Power law graphs:} We construct $G$ as follows \cite{power-law2001random}; we start with a random subgraph with 5 nodes. At each iteration, a node is added to the network connecting to $\theta$ existent nodes with probabilities proportional to their degrees. This process is repeated till the number of nodes in the network is equal to $n=250$. In our simulations, we use $\theta=2$ which results in networks with the average density approximately $0.01$.
\item {\bf Grid networks:} In this case, $G$ is an undirected square grid network with $250$ nodes. We assume that the source node is located at the center of the grid to avoid boundary effects.

\item {\bf Asymmetric grid networks:} In this case, $G$ is an undirected graph with $250$ nodes. It has $6$ branches connected to the central node, three branches on the right with heavier connectivity among their nodes, and three branches on the left with sparse connectivity. Figure \ref{fig:asym-grid}-a shows an example of such networks with fewer number of nodes. This is an adversarial example to highlight cases where NI methods fail to converge to the optimal value.
\end{itemize}
In multi-path NI methods, we consider top $10$ independent shortest paths to form the $k$-path network diffusion kernel (i.e., $k=10$). However, in the grid network, to enhance the computation time, we consider $k=2$ because different nodes have at most $2$ independent shortest paths connected to the source node. Parameter $t$ is the time at which we observe the infection spread, and it determines the fraction of infected nodes in the network. If $t$ is very large compared to the graph diameter, almost all nodes in the network become infected. On the other hand, for very small values of $t$, the source inference problem becomes trivial. In our simulations, we consider the cases where the number of infected nodes in the network is less than $75\%$ of the total number of nodes, and greater than at least $10$ nodes. We learn this parameter using the observed infection pattern according to techniques introduced in Section \ref{subsec:parameters}.

For evaluation, we sort infected nodes as source candidates according to the score obtained by different methods. High performing methods should assign the highest scores to the source node. The source node should appear on the top of the inferred source candidates. Ideally, if a method assigns the highest score to the source node, the rank of the prediction is one. We use the rank of true sources averaged over different runs of simulations. More formally, suppose $r(\cM,s)$ is the rank of the source node $s$ inferred by using the method $\cM$. In an exact prediction, $r(\cM,s)=1$, while an average rank of a method based on random guessing is $r(\cM,s)=|\VI^t|/2$. If $r(\cM,s)$ is close to one, it means that the true source is among top predictions of the method $\cM$. In each case, we run simulations 100 times.

Figure \ref{fig:main-sim}-(a-d) compares the performance of different source inference methods over both Erd\"os-R\'enyi and power law networks, and in different ranges of the parameter $t$. In both network models and in all diffusion rates, NI Algorithms based on maximum likelihood (NI-ML) and minimum error (NI-ME) outperform other methods. Panels (a) and (c) illustrate the performance when only one sample from the infection pattern at time $t$ is available, for Erd\"os-R\'enyi and power law networks, respectively. Panels (b) and (d) illustrate the performance of different methods when five independent samples from the infection pattern at time $t$ are given, illustrating the mean-field optimality of NI methods according to Theorems \ref{thm:optimality-NI-ML} and \ref{thm:optimality-NI-ME}. Because the underlying networks are sparse, according to Proposition \ref{prop:erdos-overlap}, the performance of maximum likelihood and minimum error NI methods, both shortest-path and multi-path versions, are close to each other in both network models. Notably, unlike NI methods, the performance of other source inference methods such as distance centrality and degree centrality does not tend to converge to the optimal value even for higher sample sizes.

Figure \ref{fig:main-sim}-(e,f) compares the performance of different source inference methods over grid networks in different ranges of the parameter $t$. In the case of having a single sample from the infection pattern, in small ranges of the parameter $t$, when the fraction of infected nodes is less than approximately $n/4$, NI methods and distance centrality have approximately the same performance, significantly outperforming the degree centrality method. In higher diffusion rates, distance centrality outperforms NI-ML method and NI-ML method outperforms NI-ME method. Again in this range, the performance of the degree centrality method is significantly worst than other methods. Having five independent samples from node infection patterns, the performance of NI methods and distance centrality converges to the optimal one.

Unlike tree and sparse Erd\"os-R\'enyi networks, there are multiple overlapping paths among nodes in the grid structures. However, as we illustrate in Figure \ref{fig:main-sim}-(e,f), the performance of NI methods converge to the optimal value, similarly to the case of sparse graphs. The main reason is that grid structures are symmetric and even though paths among nodes overlap significantly, considering shortest paths among nodes approximates the true underlying diffusion based on an SI dynamics closely. In order to have an adversarial example where the performance of NI methods do not converge to the optimal one, we design an asymmetric grid structure illustrated in Figure \ref{fig:asym-grid}. The three branches on the right side of the central node have strong connectivity among themselves, forming a grid, while the ones on the left is sparsely connected to each other, forming a tree. Therefore, the spread of infection will be faster on the right side compared to the left side, under a dynamic SI model. However, considering only the shortest path among nodes does not capture this asymmetric structure. Therefore, the performance of shortest path NI methods diverges from the optimal value as diffusion rate increases (see Figure \ref{fig:asym-grid}). In this case, considering more paths among nodes to form the $k$-path network diffusion kernel according to Definition \ref{def:path-diffusion-kernel} improves the performance of NI methods significantly, because higher order paths partially capture the asymmetric diffusion spread in the network. Note that the degree centrality method has the best performance in this case, because the source node has the highest degree in the network by design. If we select another node to be the source node as illustrated in Figure \ref{fig:asym-grid-noisy}-a, the performance of the degree centrality method becomes worst significantly, indicating its sensitivity to the source location. In the setup of Figure \ref{fig:asym-grid-noisy}, multi-path NI methods outperform the one of other methods, although their performance do not converge to the optimal value.

\begin{figure*}[t]
  \centering
      \includegraphics[width=14cm]{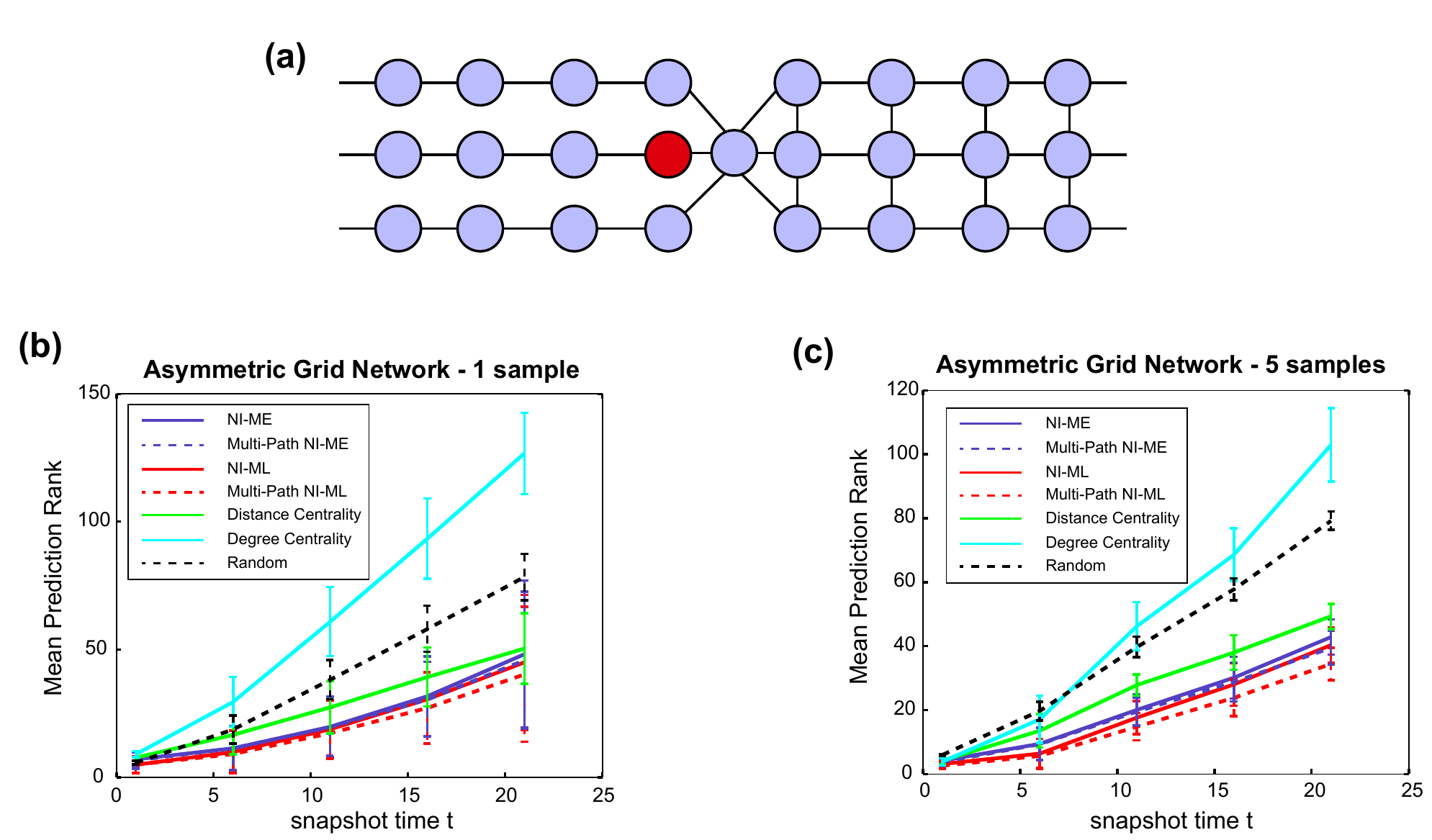}
  \caption{(a) An asymmetric grid network. Performance of source inference methods over asymmetric grid networks with $250$ nodes with (a) one and (b) five independent samples. Node with red color is the source node. Error bars represent $\pm$ standard deviations of empirical results for each case. Experiments have been repeated 100 times so that error margins are small.}
  \label{fig:asym-grid-noisy}
\end{figure*}
\section{Inference of News Sources in a Social News Network}\label{SIsec:digg}
In this section, we evaluate the performance of NI and other source inference techniques in identifying news sources over the social news aggregator Digg (http://digg.com). We also propose an integrative source inference framework which combines network infusion methods (NI-ME and NI-ML) with the distance centrality method. In our integrative framework, we compute the average rank of each node to be the source node according to different source inference methods. We exclude the degree centrality method from our integration owing to its low performance over synthetic and real networks.

Digg allows its users to submit and rate different news links. Highly rated news links are {\it promoted} to the front page of Digg. Digg also allows its users to become friends with other users and follow their activities over the Digg network. Digg's friendship network is asymmetric, i.e., user A can be a follower (friend) of user B but not vice versa. Reference \cite{digg} have collected voting activities and friendship connections of Digg's users over a period of a month in 2009 for $3,553$ promoted stories. We use this data to form a friendship network of active Digg users with $24,219$ nodes and more than $350 K$ connections. We consider users as active if they have voted for at least 10 stories in this time period. Figure \ref{fig:framework} demonstrates a small part of the Digg friendship network.

This application provides an interesting real data framework to assess the performance and robustness of different source inference methods because the true source (i.e., the user who started the news) are known for different stories and also the underlying diffusion processes are based on real dynamics over the Digg friendship network. Moreover, not all of the voting pattern is derived by the source users and there are disconnected voting activities over the Digg friendship network. Thus, performance assessment of different source inference methods in this application can provide a measure of robustness of different methods under real-world circumstances.

Real dynamics over the Digg social news network is in fact significantly different from the standard SI diffusion model. Suppose for a given news propagation, a node $i$ is infected at time $\tau_i$. We compute a realization of random variable $\cT_{(j,i)}$ according to the SI model \ref{def:SI-model}. Over all considered news propagations over the Digg network, around $65\%$ of non-source nodes do not have any infected neighbors in the network at the time of being infected, which violates the most basic conditional independency assumption of the SI model. This can be partially owe to existence of latent sources of information propagation, and/or noisy and spurious network connections. Furthermore, the empirical distribution of remaining news propagation times over edges of the Digg social news network cannot be approximated closely by a single distribution of a homogenous SI diffusion model, even by fitting a general Weibull distribution to the observed data. In Figure \ref{fig:SIR-digg}, we compute the best fit of Weibull distribution (which contains exponential and Rayleigh distributions) to the observed empirical distribution. Parameters of the fitted distribution are $\lambda\approx 3.13$ and $k\approx 0.51$. As it is illustrated in this figure, fitted and empirical distributions differ significantly from each other.

\begin{figure}[t]
  \centering
      \includegraphics[height=5cm]{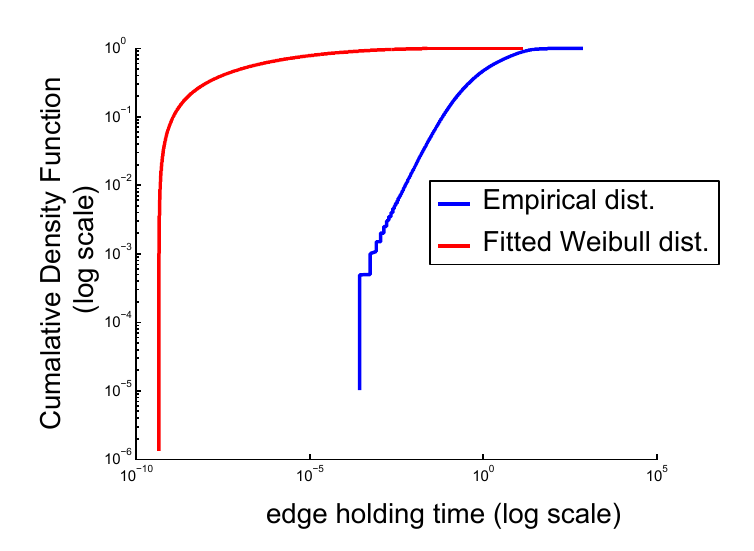}
  \caption{We consider news spread over the Digg social news networks for over 3,500 news stories. In approximately $65\%$ of cases, nodes that have received the news at a time $t$ do not have any neighbors who had already received the news by that time, violating the basic conditional independency assumption of the SI model. Furthermore, the empirical distribution of remaining news propagation times over network edges cannot be approximated closely by a single distribution of a homogenous SI diffusion model, even by fitting a general Weibull distribution to the observed data (Appendix \ref{SIsec:digg}).}
  \label{fig:SIR-digg}
\end{figure}
\section{proofs}\label{sec:proofs}
In this section, we present proofs of the main results of the paper.

\subsection{Proof of Proposition \ref{prop:complexity-diffusion-kernel}}\label{subsec:proof-kernel-complexity}
To compute the $k$-path network diffusion kernel, we need to compute $k$-independent shortest paths among nodes. Note that ties among paths with the same length is broken randomly as explained in Section \ref{subsec:kernel}. Computation of these paths among one node and all other nodes using the Dijkstra's algorithm costs $\cO(k|E|+kn\log(n))$. Thus, computational complexity of forming the entire kernel has complexity $\cO(k|E|n+kn^2\log(n))$.

\subsection{Proof of Proposition \ref{prop:complexity-NI}}\label{subsec:proof-NI-complexity}
Computation of the $k$-path network diffusion kernel for one node has complexity $\cO(k|E|+kn\log(n))$, according to Proposition \ref{prop:complexity-diffusion-kernel}. We need to compute the kernel for $\VI^t$ nodes. Moreover, Optimizations \eqref{opt:NI-ML} and \eqref{opt:NI-ME} have complexity $\cO(|\VI^t|n)$. Thus, the total computational complexity is $\cO\big(|\VI^t|(k|E|+kn\log(n))\big)$.

\subsection{Proofs of Theorem \ref{thm:optimality-NI-ML} and Proposition \ref{prop:robustness-NI-ML}}\label{subsec:proof-ML-optimality}

First, we prove the following lemma:

\begin{lemma}\label{lem:log-inequality}
Let $x$, $y$ and $z$ be positive numbers such that $0<x,y<z$. Define
\begin{align}
f(x,y)\triangleq x \log \frac{x}{y}+ (z-x) \log \frac{z-x}{z-y}.
\end{align}

Then, $f(x,y)\geq 0$ where equality holds iff $x=y$.
\end{lemma}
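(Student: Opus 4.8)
The plan is to recognize that $f(x,y)$ is, up to the scaling factor $z$, the Kullback--Leibler divergence between two Bernoulli distributions, so the inequality is just nonnegativity of relative entropy (Gibbs' inequality). I will, however, present the self-contained one-variable convexity argument, since it needs nothing beyond elementary calculus and makes the equality case transparent.

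First I would fix $x$ and $z$ with $0<x<z$ and study $g(y)\triangleq f(x,y)$ as a function of $y$ on the open interval $(0,z)$. Differentiating term by term gives $g'(y) = -x/y + (z-x)/(z-y)$, and a second differentiation gives $g''(y) = x/y^2 + (z-x)/(z-y)^2$, which is strictly positive on $(0,z)$ because $x>0$ and $z-x>0$. Hence $g$ is strictly convex on $(0,z)$.

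Next I would locate the unique stationary point: $g'(y)=0$ is equivalent to $x(z-y) = y(z-x)$, i.e. $xz = yz$, i.e. $y=x$. By strict convexity this is the unique global minimizer of $g$ on $(0,z)$, and substituting $y=x$ gives $g(x) = x\log 1 + (z-x)\log 1 = 0$. Therefore $f(x,y) = g(y) \ge g(x) = 0$ for all $y\in(0,z)$, with equality precisely when $y=x$. I would also remark that the endpoints cause no trouble, since $g(y)\to+\infty$ as $y\to 0^+$ or $y\to z^-$, so the infimum is genuinely attained in the interior.

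There is essentially no obstacle here; the only care needed is the sign bookkeeping in $g'$ and $g''$ and the clean reduction $g'(y)=0 \iff y=x$. (One could shorten the whole argument to a line by writing $x=zp$, $y=zq$ and noting $f(x,y)=z\bigl(p\log\tfrac{p}{q}+(1-p)\log\tfrac{1-p}{1-q}\bigr)\ge 0$ by Gibbs' inequality, with equality iff $p=q$; I would keep the convexity proof for self-containedness.)
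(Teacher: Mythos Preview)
Your proof is correct and follows essentially the same approach as the paper: fix $x$, differentiate $g(y)=f(x,y)$ twice to obtain strict convexity, and observe that the unique stationary point $y=x$ gives $g(x)=0$. Your write-up is slightly more detailed (you explicitly solve $g'(y)=0$ and note the KL/Gibbs interpretation), but the argument is the same.
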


\begin{proof}
We have,
\begin{align}
\frac{\partial f}{\partial y} &= -\frac{x}{y}+\frac{z-x}{z-y},\nonumber\\
\frac{\partial^2 f}{\partial y^2}&=\frac{x}{y^2}+\frac{z-x}{(z-y)^2}>0.\nonumber
\end{align}
\noindent
For any $0<x0<z$, define $g(y)=f(x0,y)$. Above we show that $g(y)$ is convex, and its minimum is zero which occurs at $y=x0$. Since this holds for any $0<x0<z$, this completes the proof.
\end{proof}
\noindent
Recall that $\cL(i,t')$ is the likelihood score of node $i$ using diffusion parameter $t'$:

\begin{align}
\cL(i,t')=\sum_{j\in\VI^t} \log\big(p_{i,j}(t')\big)+\sum_{j\notin\VI^t} \log\big(1-p_{i,j}(t')\big).
\end{align}
\noindent
First, we prove Proposition \ref{prop:robustness-NI-ML}. Let $s$ be the source node and $t$ be the infection observation time ($t'$ can be different than $t$, see explanations of Remark \ref{remark:NI-ML-t-parameter}). Thus, we can write,

\begin{align}
&\bbE[\cL(s,t)]-\bbE[\cL(s,t')]\\
&=  \sum_{j\in V} p_{s,j}(t)\log\big(p_{s,j}(t)\big)+ \big(1-p_{s,j}(t)\big)\log\big(1-p_{s,j}(t)\big)\nonumber\\
& - \sum_{j\in V} p_{s,j}(t)\log\big(p_{s,j}(t')\big)+ \big(1-p_{s,j}(t)\big)\log\big(1-p_{s,j}(t')\big)\nonumber\\
& =  \sum_{j\in V} p_{s,j}(t)\log\frac{p_{s,j}(t)}{p_{s,j}(t')}+ \big(1-p_{s,j}(t)\big)\log \frac{1-p_{s,j}(t)}{1-p_{s,j}(t')}\nonumber\\
& \stackrel{(I)}{\geq} \sum_{j\in V} p_{s,j}(t)\log\frac{\sum_{j\in V} p_{s,j}(t)}{\sum_{j\in V} p_{s,j}(t')}+ \sum_{j\in V} 1-p_{s,j}(t)\log \frac{\sum_{j\in V}1-p_{s,j}(t)}{\sum_{j\in V}1-p_{s,j}(t')}\nonumber\\
& \stackrel{(II)}{\geq} 0. \nonumber
\end{align}
\noindent
Inequality (I) follows from the log-sum inequality. Inequality (II) follows from Lemma \ref{lem:log-inequality}. In particular, according to Lemma \ref{lem:log-inequality}, the equality condition (II) holds iff $\sum_{j\in V} p_{s,j}(t)=\sum_{j\in V} p_{s,j}(t')$ which indicates $t=t'$. This completes the proof of Proposition \ref{prop:robustness-NI-ML}.
\noindent
In the next step, we prove Theorem \ref{thm:optimality-NI-ML}.

\begin{align}
&\bbE[\cL(s,t)]-\bbE[\cL(i,t')] \\
&=  \sum_{j\in V} p_{s,j}(t)\log\big(p_{s,j}(t)\big)+ \big(1-p_{s,j}(t)\big)\log\big(1-p_{s,j}(t)\big)\nonumber\\
& -  \sum_{j\in V} p_{s,j}(t)\log\big(p_{i,j}(t')\big)+ \big(1-p_{s,j}(t)\big)\log\big(1-p_{i,j}(t')\big)\nonumber\\
& =  \sum_{j\in V} p_{s,j}(t)\log\frac{p_{s,j}(t)}{p_{i,j}(t')}+ \big(1-p_{s,j}(t)\big)\log \frac{1-p_{s,j}(t)}{1-p_{i,j}(t')}\nonumber\\
& \stackrel{(III)}{\geq}  \sum_{j\in V} p_{s,j}(t)\log\frac{\sum_{j\in V} p_{s,j}(t)}{\sum_{j\in V} p_{i,j}(t')}+ \sum_{j\in V} 1-p_{s,j}(t)\log \frac{\sum_{j\in V}1-p_{s,j}(t)}{\sum_{j\in V}1-p_{i,j}(t')}\nonumber\\
& \stackrel{(IV)}{\geq} 0. \nonumber
\end{align}
\noindent
Inequality (III) follows from the log-sum inequality. Inequality (IV) follows from Lemma \ref{lem:log-inequality}. This completes the proof of Theorem \ref{thm:optimality-NI-ML}.

\subsection{Proof of Theorem \ref{thm:optimality-NI-ME}}\label{subsec:proof-ME-optimality}
\begin{figure*}[t]
  \centering
      \includegraphics[height=5cm]{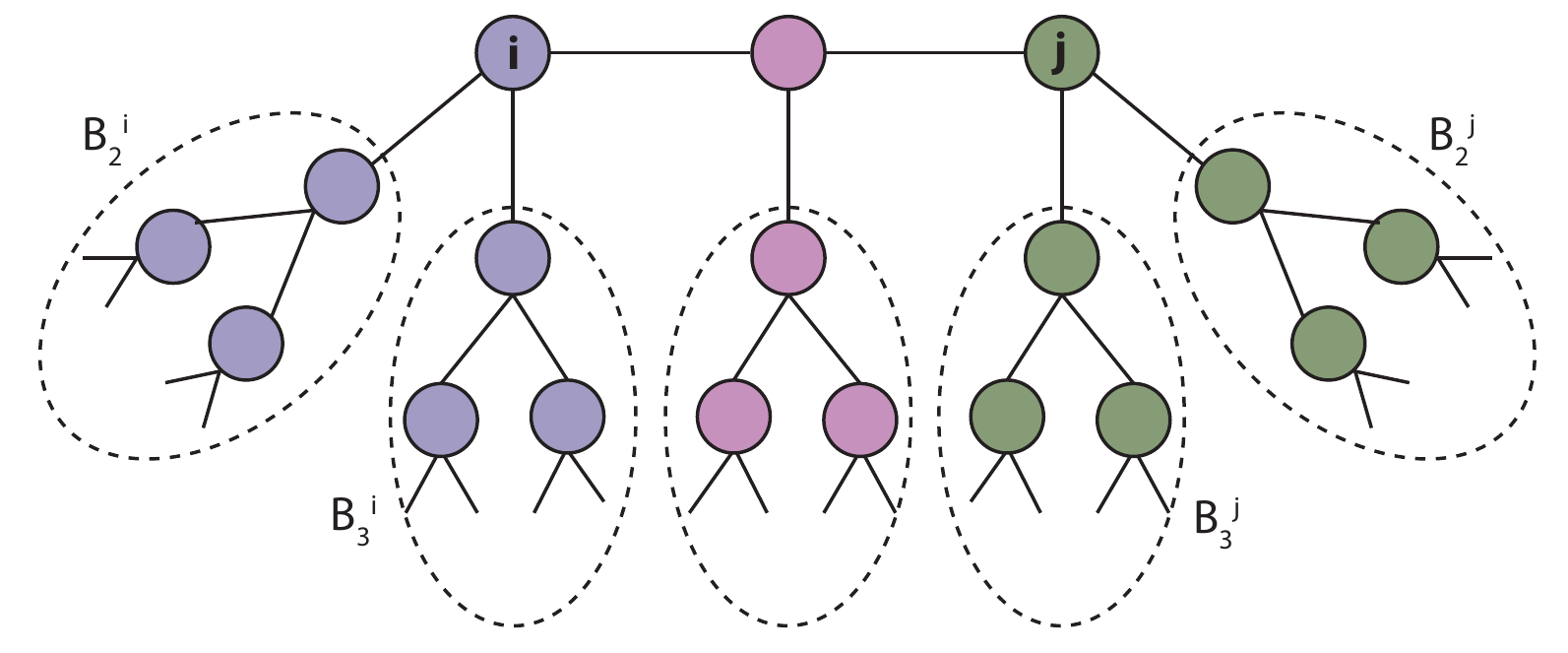}
  \caption{An example of a regular tree used in the proof of Theorem \ref{thm:optimality-NI-ME}.}
  \label{fig:tree}
\end{figure*}
\noindent
To prove Theorem \ref{thm:optimality-NI-ME}, first we prove regular trees are distance symmetric according to the following definition:

\begin{definition}\label{def:distance-symmetric}
A graph $G=(V,E)$ is distance symmetric if for any pair of nodes $i,j\in V$, there exists a graph partition $\{V_1,V_2,V_3\}$ where,
\begin{itemize}
\item $\forall r\in V_1$, $d(i,r)=d(j,r)$. I.e., distances of nodes in $V_1$ from both nodes $i$ and $j$ are the same.
\item There exists a bijective mapping function $\zeta(.)$ between nodes $V_2$ and $V_3$ (i.e., $g:V_2\to V_3$) such that for any $r\in V_2$, $d(i,r)=d(j,\zeta(r))$.
\end{itemize}
\end{definition}
\noindent
In the following Lemma, we show that regular trees are in fact distance symmetric:

\begin{lemma}\label{lem:distance-sym-trees}
Let $G=(V,E)$ be a regular tree where $V$ is countably infinite set of vertices and $E$ is the set of edges. Then, $G$ is distance-symmetric according to Definition \ref{def:distance-symmetric}.
\end{lemma}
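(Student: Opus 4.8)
The plan is to reduce the statement to the existence of a single graph automorphism, and then read the required partition off of it. Concretely, I would first show: for any two vertices $i,j$ of a countably infinite regular tree $G$ there is an \emph{involutive} automorphism $\phi$ of $G$ with $\phi(i)=j$ (hence $\phi(j)=i$). Given such a $\phi$, the partition is immediate. Let $V_1$ be the set of fixed points of $\phi$; since $\phi$ is an involution, $V\setminus V_1$ splits into two-element orbits $\{r,\phi(r)\}$, and selecting one representative from each orbit yields a set $V_2$ with $V_3\triangleq\phi(V_2)=V\setminus(V_1\cup V_2)$, so that $\{V_1,V_2,V_3\}$ is a partition of $V$. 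Because graph automorphisms preserve graph distance and $\phi(i)=j$, every $r\in V_1$ satisfies $d(i,r)=d(\phi(i),\phi(r))=d(j,r)$, and $\zeta\triangleq\phi|_{V_2}\colon V_2\to V_3$ is a bijection with $d(i,r)=d(\phi(i),\phi(r))=d(j,\zeta(r))$ for all $r\in V_2$. This is exactly Definition \ref{def:distance-symmetric}.

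It therefore remains to construct $\phi$. Let $\delta\ge 2$ be the common degree of $G$ (a countably infinite regular tree cannot have a leaf), and let $P=(v_0=i,v_1,\dots,v_\ell=j)$ be the unique path joining $i$ and $j$ in the tree. Define $\phi$ on $P$ as the reflection $v_m\mapsto v_{\ell-m}$, an involution of $P$. Deleting the edges of $P$ decomposes $V$ into $P$ together with ``branches'': for $0<m<\ell$ the vertex $v_m$ roots $\delta-2$ pendant subtrees, while each endpoint $v_0,v_\ell$ roots $\delta-1$ of them, and every pendant subtree is the rooted tree in which every vertex has exactly $\delta-1$ children; hence any two pendant subtrees are isomorphic as rooted trees, and $v_m$ and its mirror $v_{\ell-m}$ root the same number of them (the endpoint case and the interior case are interchanged by $m\mapsto\ell-m$). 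For each unordered pair $\{m,\ell-m\}$ with $m\neq\ell-m$, fix a bijection between the branches at $v_m$ and those at $v_{\ell-m}$ and a rooted isomorphism across each matched pair, and extend $\phi$ by these isomorphisms together with their inverses (for involutivity); for the central vertex $v_{\ell/2}$ when $\ell$ is even, let $\phi$ act as the identity on its branches. The resulting map $\phi\colon V\to V$ sends edges to edges in every case (edges within $P$, edges from a path vertex to an incident branch, edges internal to a branch), so it is a graph automorphism; it is an involution by construction; and $\phi(v_0)=v_\ell$, i.e. $\phi(i)=j$.

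A little bookkeeping finishes the argument. The fixed-point set is $V_1=\{v_{\ell/2}\}$ together with its branches when $\ell$ is even, and $V_1=\emptyset$ when $\ell$ is odd (the midpoint of $P$ is then an edge); in the odd case one simply records that the partition is $\{\emptyset,V_2,V_3\}$, which still fulfils Definition \ref{def:distance-symmetric}. The main obstacle in this plan is precisely the branch-matching step: it is the only place regularity is used essentially, since it is what guarantees that the reflection of $P$ extends to a global automorphism rather than merely a path isometry, and it is what makes $V_2,V_3$ genuinely ``mirror images.'' (Alternatively, one could invoke the standard fact that the automorphism group of the infinite $\delta$-regular tree acts transitively on paths of each fixed length and contains a path-reversing element, but the explicit construction above is self-contained and keeps the distance identities transparent.)
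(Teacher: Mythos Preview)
Your proof is correct, and at its core it exploits the same symmetry the paper does---reflection of the regular tree across the midpoint of the $i$--$j$ path---but you package the argument differently. The paper builds the partition $\{V_1,V_2,V_3\}$ by hand: it first assigns the $k-1$ branches hanging off $i$ (away from $j$) to $V_2$ and the corresponding branches at $j$ to $V_3$, then walks along the interior vertices of the $i$--$j$ path and splits their pendant subtrees between $V_2$ and $V_3$ according to which half of the path they sit on, reserving the branches at the middle vertex (when the path length is even) for $V_1$. Your route instead first produces a single involutive automorphism $\phi$ swapping $i$ and $j$, and then reads the partition off the orbit structure of $\phi$; the distance identities in Definition~\ref{def:distance-symmetric} then follow in one line from the fact that automorphisms are isometries. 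The benefit of your formulation is that it cleanly separates the combinatorial input (regularity forces all pendant branches to be isomorphic rooted trees, so the path reflection extends globally) from the conclusion, and it makes transparent why the construction works for \emph{any} graph admitting such an involution, not just regular trees. The paper's explicit construction, on the other hand, avoids any appeal to orbit-selection and makes the sets $V_1,V_2,V_3$ completely concrete, which is what the subsequent proof of Theorem~\ref{thm:optimality-NI-ME} actually uses when it pairs $j\in V_2$ with $j'=\zeta(j)\in V_3$.
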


\begin{proof}
Consider two distinctive nodes $i$ and $j$. In the following, we construct graph partitions $V_1$, $V_2$ and $V_3$ satisfying conditions of Definition \ref{def:distance-symmetric}. Let degree of nodes in the regular tree be $k$. Thus, there are $k$ branches connected to each nodes $i$ and $j$, denoted by $\{B_1^i,\ldots,B_k^i\}$ and $\{B_1^j,\ldots,B_k^j\}$, respectively. Without loss of generality, assume $i\in B_1^j$ and $j\in B_1^i$. Add $\{B_2^i,\ldots,B_k^i\}$ and $\{B_1^j,\ldots,B_k^j\}$ to the partition sets $V_2$ and $V_3$, respectively. A mapping function $\zeta(.)$ of Definition \ref{def:distance-symmetric} can be constructed between these sets by mapping nodes over branches $B_l^i$ ($l\neq 1$) to nodes over branches $B_l^j$ ($l\neq 1$) in a symmetric way (see Figure \ref{fig:tree}).
\noindent
Now consider the branch connecting nodes $i$ and $j$ in the graph. Let $\{1,2,\ldots,l\}$ be nodes over the shortest path connecting node $i$ to node $j$. Therefore, $d(i,j)=l+1$, the distance between nodes $i$ and $j$.
\begin{itemize}
\item If $l$ is odd, add non-partitioned nodes connected to nodes $\{1,\ldots,l/2\}$ to the partition set $V_2$. Similarly, add remaining nodes connected to nodes $\{l/2+1,\ldots,l\}$ to the partition set $V_3$.

\item If $l$ is odd, add non-partitioned nodes connected to nodes $\{1,\ldots,\lfloor l/2 \rfloor\}$ and $\{\lceil l/2 \rceil+1,\ldots,l\}$ to partition sets $V_2$ and $V_3$, respectively. Non-partitioned nodes connected to the node $\lceil l/2 \rceil$ are assigned to the partition set $V_1$.
\end{itemize}
\noindent
A mapping function $\zeta(.)$ of Definition \ref{def:distance-symmetric} can be constructed between newly added nodes to partition sets $V_2$ and $V_3$ in a symmetric way. Moreover, nodes in the partition set $V_1$ have the same distance from both nodes $i$ and $j$. This completes the proof.
\end{proof}
\noindent
Without loss of generality, suppose node $0$ is the source node and we observe the infection pattern at time $t$ according to the SI diffusion model. Thus, $Pr[y_{j}(t)=1]=p_{0,j}(t)$, defined according to equation \eqref{eq:erlang-kernel}. Suppose we use parameter $t'$ in Network Infusion Algorithm \ref{alg:NI-ME-single}. According to equation \eqref{opt:NI-ME}, we have,

\begin{align}
&\bbE[\cH_{\alpha}(i,t')]\\
&=\sum_{j\in V} (1-\alpha) p_{0,j}(t) \big(1-p_{i,j}(t')\big)+ \alpha \big(1-p_{0,j}(t)\big) p_{i,j}(t')\nonumber\\
&=\sum_{j\in V} p_{0,j}(t)\big(1-p_{i,j}(t')\big)+\alpha \big(p_{i,j}(t')-p_{0,j}(t)\big).\nonumber
\end{align}
\noindent
Thus, we have,

\begin{align}
&\bbE[\cH_{\alpha}(i,t')]-\bbE[\cH_{\alpha}(0,t')]\\
&=\sum_{j\in V} p_{0,j}(t)\big(p_{0,j}(t')-p_{i,j}(t')\big)+\alpha \big(p_{i,j}(t')-p_{0,j}(t')\big)\nonumber\\
&=\sum_{j\in V} \big(p_{0,j}(t')-p_{i,j}(t')\big)(p_{0,j}(t)-\alpha)\nonumber\\
&\overset{\rm (a)}{=}\sum_{j\in V_2} \big(p_{0,j}(t')-p_{i,j}(t')\big)(p_{0,j}(t)-\alpha)\nonumber\\
&+\sum_{j'=g(j)\in V_3}\big(p_{0,j'}(t')-p_{i,j'}(t')\big)(p_{0,j'}(t)-\alpha)\nonumber\\
&\overset{\rm (b)}{=}\sum_{j\in V_2} \big(p_{0,j}(t')-p_{i,j}(t')\big)(p_{0,j}(t)-\alpha)\nonumber\\
&+\sum_{j\in V_2}\big(p_{i,j}(t')-p_{0,j}(t')\big)(p_{i,j}(t)-\alpha)\nonumber\\
&=\sum_{j\in V_2} \big(p_{0,j}(t')-p_{i,j}(t')\big)\big(p_{0,j}(t)-p_{i,j}(t)\big).\nonumber
\end{align}
\noindent
Equality (a) comes from partitioning nodes to sets $V_1$, $V_2$ and $V_3$ according to Definition \ref{def:distance-symmetric}. The terms correspond to nodes in the partition set $V_1$ is equal to zero. Equality (b) comes from the fact that $d(0,j')=d(i,j)$ and $d(0,j)=d(i,j')$. Thus, $p_{0,j'}(.)=p_{i,j}(.)$ and $p_{i,j'}(.)=p_{0,j}(.)$.
\noindent
Therefore, if $t'=t$, we have,

\begin{align}
\bbE[\cH_{\alpha}(i,t')]-\bbE[\cH_{\alpha}(0,t')]=\sum_{j\in V_2} (p_{0,j}(t)-p_{i,j}(t))^2,
\end{align}
\noindent
which is strictly positive if $i\neq 0$.
\noindent
Now we consider the case that $t'\neq t$. Suppose $d_{0,j}<d_{i,j}$. Then, $p_{0,j}(t)>p_{0,j}(t)$ for any value of $t>0$. Therefore, $\big(p_{0,j}(t')-p_{i,j}(t')\big)\big(p_{0,j}(t)-p_{i,j}(t)\big)>0$. The same argument holds if $d_{0,j}>d_{i,j}$. If $d_{0,j}=d_{i,j}$, then $\big(p_{0,j}(t')-p_{i,j}(t')\big)\big(p_{0,j}(t)-p_{i,j}(t)\big)=0$. This completes the proof of Theorem \ref{thm:optimality-NI-ME}.

\subsection{Proof of Theorem \ref{thm:optimality-multi-source}}\label{subsec:proof-multi-source-optimality}

To simplify notation, we prove this Theorem for a specific case where there are three sources in the network ($m=3$). Also, we drop $\eps$ from $d_1^{\eps}$. All arguments can be extended to a general case. Let $\cS=\{0,1,2\}$ be the sources. Suppose at the first step of the Algorithm \ref{alg:multi-source--NI}, we have inferred the source node $0$. We show that at the next step, we have,

\begin{align}
\bbE[\cL_{d_0}(s,t)]\geq \bbE[\cL_{d_0}(i,t)],
\end{align}
\noindent
for $s\in \{1,2\}$ and for all $i\in \VI^t-D(0,d_1)$.
\noindent
Consider a node $i$ in the $d_1$-neighborhood of the source node $1$ (i.e., $i\in D(1,d_1)$. Consider a node $j$ in $d_0$-neighborhood of node $i$ (Figure \ref{fig:multi-proof-fig}). According to Equation \eqref{eq:kernel-multiple-sources}, we have,

\begin{align}\label{eq:prob-apx-proof-multi}
Pr(y_j(t)=1)&\geq p_{1,j}\\
Pr(y_j(t)=1)& \stackrel{(I)}{\leq} \sum_{s=1}^{3} p_{s,j}\nonumber\\
& \stackrel{(II)}{\leq} p_{1,j}+\frac{\eps}{n},\nonumber
\end{align}
\noindent
where Inequality (I) comes from the union bound of probabilities, and Inequality (II) uses incoherent source property of Definition \ref{def:eps-coherent}.
\noindent
Consider a node $i$ in the $d_1$-neighborhood of the source node $1$ (i.e., $i\in D(1,d_1))$. For this node, we have,

\begin{align}
\bbE[\cL_{d_0}(i,t)]&=\sum_{j\in D(i,d_0)} Pr\big(y_j(t)=1\big) \log \big(p_{i,j}(t)\big)\\
&+Pr\big(y_j(t)=0\big) \log \big(1-p_{i,j}(t)\big)\nonumber\\
&\stackrel{(III)}{=}\sum_{j\in D(i,d_0)} (p_{1,j}(t)+\frac{\eps_j}{n}) \log \big(p_{i,j}(t)\big)\nonumber\\
&+(1-p_{1,j}(t)-\frac{\eps_j}{n}) \log \big(1-p_{i,j}(t)\big)\nonumber\\
&=\sum_{j\in D(i,d_0)} p_{1,j}(t) \log \big(p_{i,j}(t)\big)\nonumber\\
&+(1-p_{1,j}(t)) \log \big(1-p_{i,j}(t)\big)\nonumber\\
&+ \underbrace{\sum_{j\in D(i,d_0)} \frac{\eps_j}{n} \log \frac{p_{i,j}(t)}{1-p_{i,j}(t)}}_\text{term IV}\nonumber\\
& \asymp \sum_{j\in D(i,d_0)} p_{1,j}(t) \log \big(p_{i,j}(t)\big)\nonumber\\
&+(1-p_{1,j}(t)) \log \big(1-p_{i,j}(t)\big),\nonumber
\end{align}
\noindent
where Equality (I) comes from Equation \eqref{eq:prob-apx-proof-multi}, and term (IV) goes to zero for sufficiently large $n$ and a fixed $t$.
\noindent
Similarly to the proof of Theorem \ref{thm:optimality-NI-ML}, we have

\begin{align}\label{eq:multi-proof-intersection}
&\sum_{j\in D(i,d_0)\cap D(1,d_0)} p_{1,j}(t) \log \big(p_{1,j}(t)\big)\\
&+(1-p_{1,j}(t)) \log \big(1-p_{1,j}(t)\big)\nonumber\\
&\geq \sum_{j\in D(i,d_0)\cap D(1,d_0)} p_{1,j}(t) \log \big(p_{i,j}(t)\big)\nonumber\\
&+(1-p_{1,j}(t)) \log \big(1-p_{i,j}(t)\big).\nonumber
\end{align}
\noindent
Note that this inequality holds for nodes in the $d_0$-neighborhood of both nodes $i$ and $1$. Now consider a node $j\in D(i,d_0)-D(1,d_0)$, and a node $j'\in D(1,d_0)-D(i,d_0)$ (see Figure \ref{fig:multi-proof-fig}). Owing to the symmetric structure of the network, similarly to Lemma \ref{lem:distance-sym-trees}, there is a one-to-one map among nodes $j$ and $j'$ such that $d(i,j)=d(1,j')$. For such node pairs $j$ and $j'$, we have,

\begin{align}\label{eq:multi-proof-difference}
&p_{1,j'}(t) \log(p_{1,j'}(t))+\big(1-p_{1,j'}(t)\big) \log\big(1-p_{1,j'}(t)\big)\\
&- p_{1,j}(t) \log(p_{i,j}(t))-\big(1-p_{1,j}(t)\big) \log\big(1-p_{i,j}(t)\big)\nonumber\\
&=p_{1,j'}(t) \log(p_{1,j'}(t))+\big(1-p_{1,j'}(t)\big) \log\big(1-p_{1,j'}(t)\big)\nonumber\\
&- p_{1,j}(t) \log(p_{1,j'}(t))-\big(1-p_{1,j}(t)\big) \log\big(1-p_{1,j'}(t)\big)\nonumber\\
&= (p_{1,j'}(t)-p_{1,j}(t))\log \frac{p_{1,j'}(t)}{1-p_{1,j'}(t)}\nonumber\\
& \geq 0,
\end{align}
\noindent
where the inequality comes from the fact that $d(1,j')<d(1,j)$. Thus, we have,

\begin{align}\label{eq:multi-proof-difference2}
&\sum_{j\in D(1,d_0)-D(1,d_0)} p_{1,j}(t) \log \big(p_{1,j}(t)\big)\\
&+(1-p_{1,j}(t)) \log \big(1-p_{1,j}(t)\big)\nonumber\\
&\geq \sum_{j\in D(i,d_0)- D(1,d_0)} p_{1,j}(t) \log \big(p_{i,j}(t)\big)\nonumber\\
&+(1-p_{1,j}(t)) \log \big(1-p_{i,j}(t)\big).\nonumber
\end{align}
\noindent
Combining Inequalities \eqref{eq:multi-proof-intersection} and \eqref{eq:multi-proof-difference2}, we have,

\begin{figure}[t]
  \centering
      \includegraphics[height=6cm]{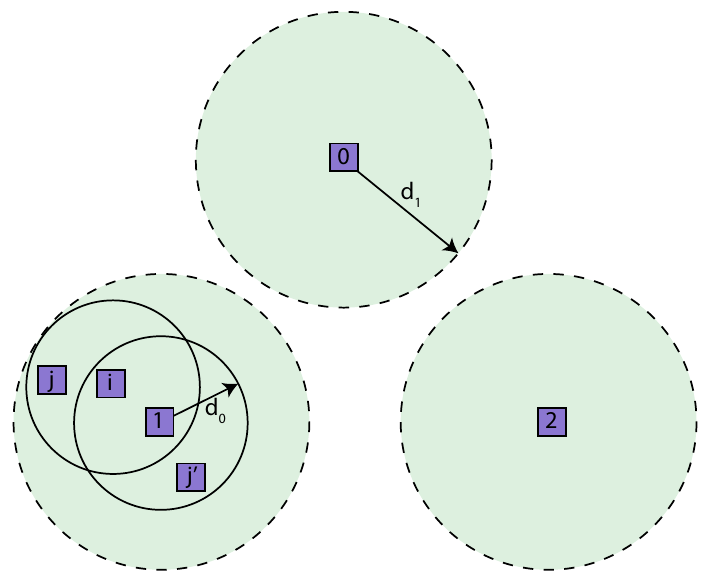}
  \caption{An illustrative figure of the proof of Theorem \ref{thm:optimality-multi-source}.}
  \label{fig:multi-proof-fig}
\end{figure}

\begin{align}
\bbE[\cL_{d_0}(1,t)]\geq \bbE[\cL_{d_0}(i,t)],
\end{align}
\noindent
for any node $i$ in the $d_1$-neighborhood of the source node $1$. The same arguments can be repeated for nodes in the $d_1$-neighborhood of the source node $2$. There are some remaining nodes that are not in the $d_1$-neighborhood of the sources. As the last step of the proof, we show that the probability of having an infected remaining node is small. Consider node $j$ such that $d(j,\cS)>d_1$. according to Equation \eqref{eq:kernel-multiple-sources} and using the probability union bound, we have
\begin{align}
Pr(y_j(t)=1)\leq \frac{\eps}{n}.
\end{align}
\noindent
Let $p_e$ denote the probability of at least one such infected node exists. We have,

\begin{align}
p_e &\leq 1-\big(1-\frac{\eps}{n}\big)^n\\
&\asymp \eps,\nonumber
\end{align}
\noindent
for sufficiently large $n$. This completes the proof of Theorem \ref{alg:multi-source--NI}.

\subsection{Proof of Proposition \ref{prop:complexity-greedy-NI}}\label{subsec:proof-complexity-multi-source}
Computation of the $k$-path network diffusion kernel for infected nodes has computational complexity $\cO\big(|\VI^{t}|(k|E|+kn\log(n))\big)$ according to Proposition \ref{prop:complexity-diffusion-kernel}. Moreover, solving Optimization \eqref{opt:ml-multi-source-greedy} for $m$ iterations costs $\cO(|\VI^{t}|nm)$. Thus, the total computational complexity of Algorithm \ref{alg:multi-source--NI} is $\cO\big(|\VI^{t}|(k|E|+kn\log(n)+mn)\big)$.

\end{appendices}


\end{document}